\newcommand{\up}{\Omega}
\newcommand{\e}{{\mathrm \epsilon}}
\newcommand{\dd}{{\mathrm d}}
\definecolor{amethyst}{rgb}{0.6, 0.4, 0.8}
\definecolor{applegreen}{rgb}{0.55, 0.71, 0.0}
\definecolor{aqua}{rgb}{0.0, 1.0, 1.0}
\definecolor{asparagus}{rgb}{0.53, 0.66, 0.42}
\definecolor{armygreen}{rgb}{0.29, 0.33, 0.13}
\definecolor{shitbrown}{rgb}{0.43, 0.21, 0.1}
\definecolor{brightpink}{rgb}{1.0, 0.0, 0.5}
\definecolor{brightube}{rgb}{0.82, 0.62, 0.91}
\definecolor{byzantine}{rgb}{0.74, 0.2, 0.64}
\definecolor{fig}{RGB}{76,78,92}
\newcommand{\proton}[1]{%
    \shade[ball color=red] (#1) circle (.25);\draw (#1) node{$+$};
}
\newcommand{\neutron}[1]{%
    \shade[ball color=green] (#1) circle (.25);
}
\newcommand{\electron}[3]{%
    \draw[rotate = #3](0,0) ellipse (#1 and #2)[color=gray];
    \shade[ball color=yellow] (0,#2)[rotate=#3] circle (.1);
}
\newcommand{\nucleus}{%
    \neutron{0.1,0.3}
    \proton{0,0}
    \neutron{0.3,0.2}
    \proton{-0.2,0.1}
    \neutron{-0.1,0.3}
    \proton{0.2,-0.15}
    \neutron{-0.05,-0.12}
    \proton{0.17,0.21}
}
\newcommand{\inelastic}[2]{
  \proton{#1,#2};
  \draw[->,thick,blue](#1+0.5,#2)--(4,-3.7);%
  \draw[->,thick,blue](0,-3)--(-0.3,-4);%
  \shade[ball color=yellow] (-0.3,-4) circle (.1); 
}
\newcommand{\elastic}[2]{
  \proton{#1,#2};
  \draw[->,thick,orange,bend right=90](#1+0.5,#2) to  [out=-30, in=-150] (4,3.);
}
\newcommand{\protoncollision}[3]{
  \proton{#1,#2};
  \draw[->,thick,red](#1+0.5,#2)--(-0.5,0);%
  \draw[snake=coil, line after snake=0pt, segment aspect=0,%
    segment length=5pt,color=red!50!blue] (0,0)-- +(4,2)%
  node[fill=white!70!yellow,draw=red!80!white, below=.01cm,pos=1.]%
  {$\gamma$};%
  \draw[->,thick,red](#1+0.5,#2)--(-0.5,0);%
  \draw[->,thick,red](0.5,0)--(3.7,-1.8);%
  \neutron{4,-2};  
}
\pgfplotsset{compat=newest}
\newtheorem{theorem}{\bf Theorem}[section]
\newtheorem{definition}{\bf Definition}[section]
\newtheorem{lemma}{\bf Lemma}[section]
\newtheorem{assumption}{\bf Assumption}[section]
\newtheorem{remark}{\bf Remark}[section]
\title{Jump stochastic differential equations for the characterisation of the Bragg peak in proton beam radiotherapy}
\author{%%%% Author details
Alastair Crossley\thanks{
Department of Statistics
University of Warwick
Coventry
CV4 7AL, UK. E-mail: \texttt{\{alastair.crossley\}, \{karen.habermann\},  \{emma.horton\}, \{andreas.kyprianou\}@warwick.ac.uk}
},
Karen Habermann$^*$,
Emma Horton$^*$,\\
Jere Koskela\thanks{School of Mathematics, Statistics and Physics
    Newcastle University
    Newcastle upon Tyne
    NE1 7RU, UK. E-mail: \texttt{jere.koskela@newcastle.ac.uk}},
Andreas E. Kyprianou$^*$,
Sarah Osman\thanks{Senior SABR and Imaging Trials and RTTQA Physicist, ULCH Proton Beam facility, 235 Euston Rd., London NW1 2BU, UK. Email \texttt{sarah.osman8@nhs.net}}.
}
\begin{document}

\maketitle
\begin{abstract}
Proton beam radiotherapy stands at the forefront of precision cancer treatment, leveraging the unique physical interactions of proton beams with human tissue to deliver minimal dose upon entry and deposit the therapeutic dose precisely at the so-called Bragg peak, with no residual dose beyond this point.
%The physics of proton beam-tissue interactions enables the optimisation of treatment plans by concentrating energy deposition into cancerous tissue while sparing healthy tissue, offering a significant advantage over more commonly used photon radiotherapy. 
The Bragg peak is the characteristic maximum that occurs when plotting  the curve 
describing the rate of energy deposition along the length of the proton beam. Moreover, as a natural phenomenon, it is caused by an increase in the rate of nuclear interactions of  protons as their energy  decreases.                                                                                                     From an analytical perspective, Bortfeld \cite{bortfeld} proposed a parametric family of curves that can be accurately calibrated to data replicating the Bragg peak in one dimension.\newline
 We build, from first principles, the very first mathematical model describing the energy deposition of protons. Our approach uses  stochastic differential equations and affords us the luxury of defining the natural analogue of the Bragg curve in two or three dimensions. This work is purely theoretical and provides a new  mathematical framework which is capable of encompassing models built using  Geant4 Monte Carlo, at one extreme, to pencil beam calculations with Bortfeld curves at the other.
 \medskip
 
 \noindent Keywords: Proton transport, Bragg peak, radiotherapy, jump stochastic differential equations, Bethe-Bloch
  %In future work we will  develop comparative studies with  existing methods, contextualising the use of the SDE approach in modern treatment planning optimisation and associated inverse problems.
 \end{abstract}
%%%%%%%%%%%%%%%%%%%%%%%%%%%

 %In future work we will  develop comparative studies with  existing methods, contextualising the use of the SDE approach in modern treatment planning optimisation and associated inverse problems.

%%%%%%%%%%%%%%%%%%%%%%%%%%%

%{\rsbreak}

%%%%%%%%%% Insert the texts which can accomdate on firstpage in the tag "fmtext" %%%%%

\section{The physics of proton beam therapy} 

The fundamental basis of proton beam therapy pertains to the way high energy protons decelerate through interactions with subatomic particles in the surrounding tissue. This is due to the fact that deceleration is synonymous with energy deposition. As the energy of a proton decreases, its rate of interaction with subatomic particles increases, leading to a greater rate of energy deposition. This means that, to a certain extent, the rate and location at which the majority of the energy from a proton beam is deposited is controllable, causing minimal exposure to surrounding healthy tissues. It is this phenomenon that distinguishes proton beam therapy from photon therapy, which uses high energy photons.
%, since the latter can expose both healthy and unhealthy tissues to high levels of energy absorption.
 
In this article we use the basic principles of atomic physics to build a mathematical model for the dynamics of protons in a proton beam travelling through matter in three-dimensional space. In particular, we develop a jump stochastic differential equation to model the trajectories of protons and their associated energies. This affords us the luxury of being able to demonstrate the existence of a function, in one, two or three dimensions, which describes the energy deposition of proton beams, extending the notion of the so-called {\it Bragg curve} to higher dimensions. For illustration purposes only, we show how, given model choices, one can reconstruct the aforementioned function via Monte Carlo simulations.
Our approach also gives a completely new mathematical framework into which we can embed most existing treatments of proton beam modelling in the public domain, whether it be high fidelity Monte Carlo simulations using Geant4 or pencil beam calibrations of the Bragg curve using Bortfeld's approximation.

We  start with a brief introduction of the physics of proton beams and some basic modelling choices we will make around it. For a more detailed read, the reader is referred to (among many possible sources) \cite{NZ}, \cite{Harald} and \cite{Gott}.

\subsection{Proton interactions} In order to describe the evolution of protons, we need to introduce the  spatio-directional-energy phase space in which they are described. To that end, let $D\subset
\mathbb R^3$ denote a closed and bounded convex spatial domain, $\mathbb S_2$ be
the unit sphere in $\mathbb R^3$ and $\mathcal{E} = [\mathtt{e}_\mathtt{min},
\mathtt{e}_\mathtt{max}]\subset[0,\infty)$ be the range of energies a proton can take. We thus define
the energy-position-direction phase space as
$
  \mathcal{C} = \mathcal{E}\times D\times \mathbb{S}_2 .
$

\begin{wrapfigure}{l}{6cm}
    \vspace{-15pt}
    \begin{tikzpicture}[scale=0.5]
        \nucleus
        \electron{1.5}{0.75}{80}
        \electron{1.2}{1.4}{260}
        \electron{4}{2}{30}
        \electron{4}{3}{180}
        \protoncollision{-6.}{0.}{160}
        \inelastic{-6.}{-2.}
        \elastic{-6.}{2.}
    \end{tikzpicture}
    \caption{The three main interactions of a proton with matter. An elastic Coulomb scattering (top) with
      the nucleus, a 
       non-elastic proton-nucleus collision (centre), and an
      {inelastic} Coulomb interaction with atomic
      electrons (bottom).
      \label{fig:atom}
    }
    \vspace{-30pt}
\end{wrapfigure}
Consider now a  proton travelling through matter with an instantaneous  configuration $x= (\e, r,\omega)\in \mathcal{C}$. In order to describe the dynamics of an individual proton, we consider the notion of transport and three types of atomic interactions with surrounding matter. We refer the reader to \cite{NZ} for further details. 

\smallskip

\noindent{\bf Transport.} In the space between atoms, the proton moves in straight
lines in the direction $\omega$. %Its speed $\upsilon(x)$ is deemed to be a function of its current configuration. 

  \smallskip
  
  \noindent{\bf Inelastic Coulomb interaction.} Protons {\color{black}lose} energy continuously as a consequence of a large number of collisions with orbital electrons. With instantaneous configuration $x = (\e, r,\omega)$, this occurs at rate (per unit track length) given by what is commonly referred to as the {\it cross section}\footnote{A consistent interpretation of the meaning of $\varsigma(x)$, and the other cross sections or rates discussed in this section, will also be specified in Section \ref{sde-model} in terms of stochastic modelling.} $\varsigma(x)$. In addition to this, one may also imagine an infinitesimally small scatter of the relatively massive proton from  each collision with the much smaller electrons. We will aggregate this small scatter with other types of small scatter (discussed below) and return to its treatment shortly. %To all intents and purposes we may ignore this. %We will  model this as a Langevin-type component in the motion of protons. That is to say, the direction of motion of the proton is diffusive on $\mathbb{S}_2$ with instantaneous diffusion coefficient given by  $m_{\rm ie}(x)\geq0$.

  \smallskip

\noindent{\bf  Elastic Coulomb scatter.} This event corresponds to the proton passing sufficiently close to an atomic nucleus to affect its trajectory. The term `elastic' refers to the fact that energy is conserved between the proton and the nucleus. Thus, this interaction results only in a change of direction of the incoming proton. We distinguish between two cases: big and small scatters. For this, we fix a cut-off, say $\delta > 0$, such that if $\omega$ denotes the incoming direction and $\omega'$ denotes the outgoing direction then the case $|\omega - \omega'| > \delta$ is termed a `big' (elastic Coulomb) scatter and the case $|\omega - \omega'| \le \delta$ is called a `small' scatter.
For big scatters, we use the cross section $\sigma_{\rm e}(x) \pi_{{\rm e}}(x; \dd \omega' )$ to mathematically describe these events. In this case, $\sigma_{\rm e}(x)$ denotes the rate at which a proton with configuration $x$ undergoes a `big' elastic Coulomb scatter, and the probability density $\pi_{{\rm e}}(x; \dd \omega' )$ denotes the probability that, given a proton with incoming direction $\omega$ undergoes a `big' elastic Coulomb scatter, the outgoing direction is $\omega'$.
%The incoming direction of travel $\omega$ is deemed to change to an outgoing direction of travel $\omega'$. Mathematically speaking, we can say that for such events which induce `large' scatter (meaning $|\omega-\omega'|$ larger that some cutoff value)  at an instantaneous rate, written  $\sigma_{\rm e}(x)$, the proton experiences an instantaneous change of configuration to $x'=(\epsilon,r,\omega')$ with probability $\pi_{{\rm e}}(x; \dd \omega' )$. 
In the (medical) physics literature, it is not uncommon to think of $\sigma_{\rm e}(x)$ as a rate per unit track length, which therefore captures the amount of atomic interactions it has in accordance with the medium in which it travels. 
%This is in turn reflected by the dependency of the value of $\sigma_{\rm e}$ on the instantaneous configuration $x = (r,\omega, \e)$.  In that case $\sigma_{\rm e}$ is referblack to as a {\it cross section}. 

\noindent {\bf Aggregated small scatter.}
We consider the net effect of all small scatter events to result in a Langevin-type component\footnote{We use the term Langevin here loosely to mean that velocity is driven by a Brownian motion, in this case a spherical Brownian motion on $\mathbb{S}_2$.} in the motion of the direction of motion $\omega$ of protons; cf.  Section 3.9.6 of \cite{Vassiliev}. The coefficient we will use for the Langevin-type behaviour of a proton  at instantaneous configuration $x$ is denoted $m(x)\geq0$. As with large scatter events, we assume there is no energy loss.

  \smallskip

\noindent{\bf Non-elastic proton-nucleus collision and scatter.}  A proton will encounter 
a collision with an atomic nucleus resulting in an  interaction as follows. Instantaneously, the incoming proton is absorbed into the nucleus, which then becomes excited, emits another proton in an outgoing direction $\omega'$ whilst simultaneously recoiling and de-energising by emitting a photon\footnote{It may also happen that neutrons and other heavy ions are produced as a result of this type of interaction.} ($\gamma$-ray). The energy of the outgoing proton relative to the energy of the incoming proton thus experiences a net loss. Roughly speaking, the lost energy is that which has been taken by the recoiling nucleus and emitted photon. 
%Sometimes, the latter of these two is referblack to as {\it bremsstrahlung}\footnote{Bremsstrahlung translates from German to {\it breaking radaition} which is an obvious choice of terminology from the context of the interaction. Bremsstrahlung can occur in other types of subatomic interactions, not just the scenario described here.}.
 Non-elastic scatter occurs according to the cross section $\sigma_{{\rm ne}}(x)$, seen as a rate per unit track length, in  which case the configuration, $x$, of the incoming proton is transfered to an outgoing configuration $x' = (\e (1-u), r, \omega')$ 
with probability $\pi_{{\rm ne}}(x; \dd \omega', \dd u)$, where
 $u\in(0,1]$.

As an extreme case of a non-elastic proton-nucleus collision, it is possible that the incoming proton is absorbed into the nucleus without a new proton being emitted. In that case all of the energy of the incoming proton is absorbed and we may think of $\pi_{{\rm ne}}(\cdot; \cdot, \dd u)$ as having an atom at $u = 1$.

\smallskip

\begin{wrapfigure}{l}{6cm}
  \begin{center}
  \vspace{-10pt}

\includegraphics[width=5cm]{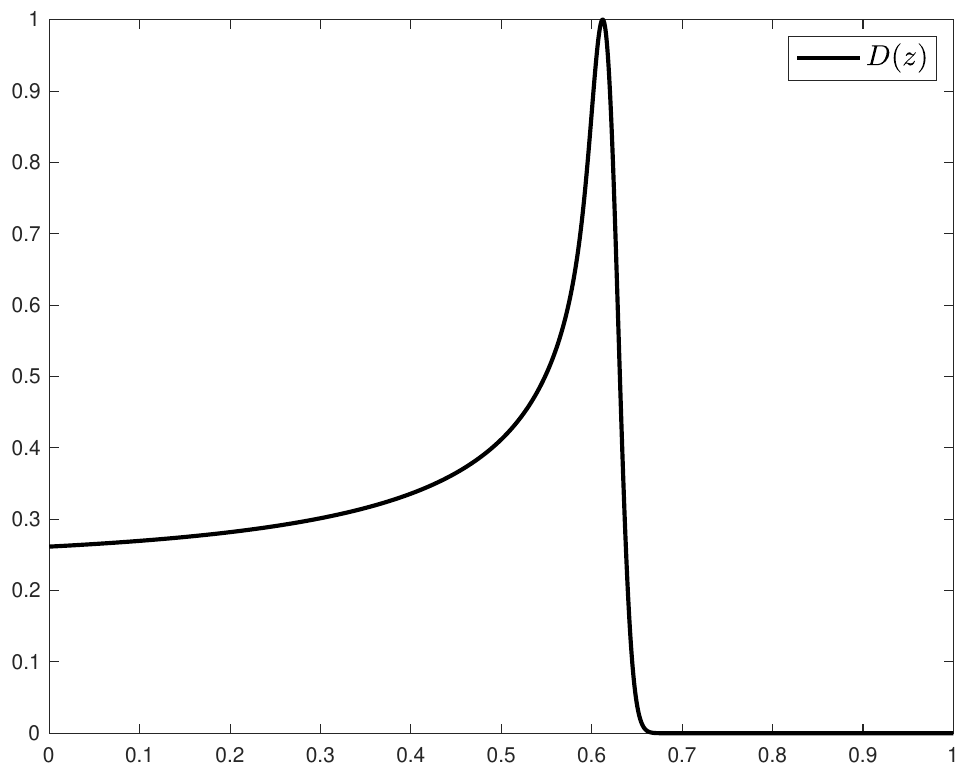}
\caption{An illustration of the Bragg Peak, $D(z)$, which plots energy deposition per unit cm against depth of proton beam into patient's tissue. In clinical settings, typical proton beams of energies range from 200 to 250 MeV are required to achieve the range of the order of 20-30 cm in tissue for treating deep-seated tumors.}
    \label{BP}
\end{center}
\vspace{-30pt}
\end{wrapfigure} 
 In the next section we will use the above 
description of the physical processes at play to describe what we call a {\it sequential proton track} and translate this into the language of a jump stochastic differential equation. By a sequential proton track, we mean the evolution in configuration space of an initial proton through to subsequent protons that continue its trajectory as a consequence of proton-nuclear interaction.

Ultimately, a sequential proton track will come to an end due to an absorption event occurring in an inelastic proton-nucleus collision. The terminus point of the sequential proton track is called the {\it range}. The less energy a sequential proton track has, the more interactions it will undergo. Qualitatively speaking, this means that the majority of the initial energy of the sequential proton track is deposited along its track close to the sequential proton track range.

\subsection{The anatomy of a proton beam and Bragg peak}

Whereas each sequential proton track is a random trajectory, we can think of a {\it proton beam} as being made up of many superimposed sequential proton tracks. Without loss of generality, we may assume that each sequential proton track behaves independently and therefore, up to a scaling factor, by the Law of Large Numbers,  the proton beam is nothing more than the average effect of an individual sequential proton track.

As alluded to above, the principle of  proton beam therapy  pertains to  the majority of the energy of each sequential proton track  being deposited in the patient's tissue around the range of each sequential proton track. 
The importance of this energy deposition in treating tumours is that the cellular structure of tissue (both healthy and unhealthy) is damaged by it. The nature of the damage comes about via 
two principal mechanisms. First, inelastic Coulomb interaction has an ionising effect on certain molecules within the cell, creating `unpleasant agents' within the cell which cause it to die. The second is that both high energy photons and protons that are consequent in the proton beam cause direct damage to the DNA of cells, preventing them from reproducing.

%The focused nature of this energy deposition means that, by understanding various parameter settings that can be controlled within the proton beam, one can ensure that the energy deposition is at its greatest within unwanted cellular structures. 

The Bragg peak is a  graphic which depicts the rate of energy deposition per unit length of proton beam against the length of the proton beam. Again, thinking of a a proton beam as the superposition of many sequential proton tracks, we may think of the Bragg Peak curve as representing the {\it average energy deposition per unit length of the proton beam.}

Figure \ref{BP} An illustration of the Bragg curve. The range of the beam is approximately where the beam peaks and this is where the maximal energy deposition occurs. Hence, in practice, the beam should be aligned so that the position of the Bragg peak aligns precisely with a targeted region within the cancerous tissue, while the rapid fall-off of dose near the end of range helps to spare normal tissues.

\subsection{This article}

To date we have been unable to find a systematic, holistic mathematical description for proton beam modelling which offers a full three-dimensional account of energy deposition.  In the well-cited \cite{bortfeld}, Bortfeld derives a parametric family of mathematical curves, which can be effectively fitted to empirical data to replicate experiments. Another key reference is \cite{NZ}, which describes a series of physical processes through a collection of mathematical descriptors that, together, explain the effect of proton beam energy deposition. Whilst these well-used articles do indeed use various mathematical ideas to describe components of the process that leads to energy deposition from proton beams,  there is no holistic and foundational mathematical understanding of how the Bragg curve emerges as a direct response to an integrated  mathematical description of subatomic random processes that are at play in the proton beam. In addition, there is no overarching mathematical model in the literature which is able to encompasses high-fidelity Monte Carlo simulations, such as those that can be performed by high-fidelity software such as Geant4.

Our objective in this article is therefore to establish a comprehensive mathematical modelling perspective that can be used in combination with existing and future modelling endeavours. The reader will see that we will need to appeal to some quite deep concepts from stochastic analysis in order to achieve our objective, and one may question whether such depth is needed. Indeed we understand that our approach is vulnerable to  the accusation that it is esoteric and not relevant to practice. In future work, we aim to demonstrate the comparative functionality of what we propose here with known numerical studies for phantoms. However, in this article, we are predominantly focused on the mathematical robustness of our completely new model. The assumptions involved in our model are extremely minimalistic relative to what reality demands, which will offer significant opportunity for model calibration in future work. Moreover, the nature of our model is conducive to understanding how to conduct and  interpret Monte Carlo simulations. 
\smallskip

\noindent{\it In particular, one of our main contributions will be to propose the concept of the Bragg curve in two and three dimensions and how it can be simulated using Monte Carlo, while also ensuring it is consistent with the one-dimensional object.} 
\smallskip

Another added value of this work is that it sets clear framework for developing different categories of inverse problems pertaining to  treatment planning via Bayesian optimisation, and to understanding the effect of treatment from secondary radiation. We discuss these two categories of problems in the conclusion of the paper.

%Looking forward, we include, for example, the possibility of importing modern variance blackuction techniques that have recently been proved in the setting of interacting particle Monte Carlo, see \cite{CP}, for the purpose of forward modelling of energy deposition at the subatomic level. Another example of where mathematical foundation is needed is in the application-lucrative setting of inverse problems where one reconstructs  energy deposition from secondary radiation emission at points of energy deposition.  

\section{The mathematics of sequential proton tracks}\label{sde-model}
We need a core mathematical object that describes the physics that underlies sequential proton tracks, but at the same time is robust enough for us to develop analytical and numerical tools to work with in the context of proton beams. Due to the inherent randomness in the underlying physics of proton beams, the natural tool of choice is that of stochastic differential equations (SDEs). We stress that, in some sense, working with SDEs is inevitable given that their development has proved to be a pervasive approach to describe the evolution of a wide class of Markov processes and their connection to Feynman--Kac formulae; see for example \cite{Oksendal}. The Feynman--Kac representation is, of course, yet another mathematical theory that has materialised out of understanding the fundamental and yet random behaviour of subatomic particles and will be of paramount importance to us here as well.

\subsection{Track length Jump SDE}

It is usual to describe the evolution of an SDE with respect to time. However in the current setting, we follow the vast majority of physics and nuclear literature, and index our SDE by `track length'.  
To this end, we introduce  
 $Y =(Y_\ell, \ell \ge 0) = ( (\epsilon_\ell, r_\ell, \Omega_\ell), \ell\geq 0)$, where $\epsilon_\ell\in \mathcal{E}$, $r_\ell\in D$, and $\Omega_\ell\in\mathbb{S}_2$ for each $\ell\geq0$. The process  $Y$ represents the random evolution in configuration space, $\mathcal{C}$, of a sequential proton track. 
Recalling the cross sections mentioned in the introduction, we propose that a sequential proton track, $Y$, satisfies the jump stochastic differential equation
% If $X$ solves \eqref{VSDE}, then standard methods for time-changed Markov processes (see \cite{V}) tell us that 

\begin{equation}
\boxed{
\begin{array}{rl}
\displaystyle\e_\ell& = \e_0 - \displaystyle\int_0^\ell \varsigma (Y_{{l}-})\dd l
- \displaystyle\int_0^\ell\int_{(0,1]}\int_{\mathbb{S}_2} u\e_{{l}-}{N}(Y_{{l}-}; \dd {l}, \dd \up', \dd u)\\
&\\
\displaystyle r_\ell &= r_0 +\displaystyle\int_0^\ell \Omega_{l-}\dd l  \\
&\\
\displaystyle\Omega_\ell& =
\Omega_0 -\displaystyle\int_0^\ell m(Y_l)^2\Omega_{l-}\dd l + \displaystyle\int_0^\ell m(Y_{l-})\Omega_{l-}\wedge\dd B_{l} \\
&\\
&\displaystyle\hspace{2cm}+ 
\int_0^\ell\int_{(0,1]}  
\int_{\mathbb{S}_2} (\Omega'-\Omega_{l-})
{N}(Y_{{l}-}; \dd {l}, \dd \up', \dd u)
\end{array}
\label{VSDE}
}
\end{equation}
where 
\begin{itemize}\itemsep0.8em
\item $\ell<\Lambda := \inf\{\ell>0: \epsilon_\ell = 0 \text{ or } r_\ell\not\in D\}$ is the track length after which the proton is absorbed or exits the domain, 
\item $Y_0 = (\epsilon_0, r_0, \Omega_0)\in  \mathcal{C}$ is the initial configuration of the proton,
\item $(B_\ell, \ell\geq0)$ is a standard Brownian motion on $\mathbb{R}^3$,
\item $m(x)%:=m_{\rm{e}}(x)+m_{\rm{ie}}(x)
\geq0$ is the volatility in direction caused by small elastic Coulomb scattering,
\item $\varsigma(x)\geq0$ is the continuous rate of loss of energy due to inelastic Coulomb interaction and small elastic Coulomb scattering, and
\item for each $x\in \mathcal{C}, \ell\geq0, \up' \in \mathbb{S}_2, u\in(0,1]$, ${N}(x; \dd \ell, \dd \up', \dd u)$ is an optional random measure\footnote{The reader unfamiliar with the notion of optional random measures is referred to e.g.\ \cite{JS} or \cite{CE}.}  with  compensator $\sigma(x)\pi (x; \dd \up', \dd u)\dd l$ where,
\[
  \sigma(x) =\sigma_{{\rm e}}(x)+ \sigma_{{\rm ne}}(x)
\]
and
\begin{equation}
\pi (x; \dd \up', \dd u):=
\frac{\sigma_{{\rm e}}(x)}{\sigma(x) }\pi_{{\rm e}}(x; \dd \up' ) \delta_0(\dd u) +\frac{\sigma_{{\rm ne}}(x)}{\sigma(x) }\pi_{{\rm ne}}(x; \dd \up', \dd u).
\label{Npi}
\end{equation}
\end{itemize}

The first two integrals of the SDE for $\Omega_\ell$ describe a state-dependent Brownian motion on $\mathbb{S}_2$ (further described in Appendix A) with $m(x)%:=m_{\rm{e}}(x)+m_{\rm{ie}}(x)
\geq0$ as its associated volatility. The interpretation of the  optional random measure ${N}$ in \eqref{VSDE} is that it captures discontinuities in the configuration space (specifically in energy and direction of motion) arising from both elastic and non-elastic Coulomb interactions. Specifically, when at $x=(\epsilon_{\ell-}, r_{\ell-}, \Omega_{\ell-}) \in \mathcal C$, interactions occur at rate $\sigma(x)$, at which point the sequential proton path jumps to configuration $(\epsilon_{\ell-}(1-u), r_{\ell-}, \Omega') $ with probability $\pi (x; \dd \up', \dd u)$. Moreover, proton absorption is embedded into the distribution $\pi$. Indeed, $N$ can be arranged so that  $q(x): =\sigma(x) \pi (x; \mathbb{S}_2, \{1\}) := \sigma(x) \int_{\mathbb S_2}\pi (x; \Omega', \{1\})\dd \Omega'$  is the killing rate at which there is proton absorption ending the sequential proton track.
Finally, as the proton beam is directional, we would expect the kernel $\pi$ to satisfy 
\begin{equation}
\int_{(0,1)}\int_{\mathbb{S}_2} (\Omega'-\Omega)\pi((\epsilon, r, \Omega); \dd\Omega', \dd u) = 0,
\label{meanshift}
\end{equation}
i.e.\ the direction of motion of a proton is unchanged on average.
 
 \smallskip

Below we state some of the key assumptions that we will place on \eqref{VSDE}. In order to state them, let us introduce the measure $\Sigma(\dd \Omega)$ for the surface measure on the unit sphere and ${\rm Leb}_S(\dd x) $ to denote Lebesgue measure on the  Borel set $S$ (which we will typically take to be $\mathcal{E}$, $D$, or $\mathbb{R}^3$ throughout this text). In addition, for any set $S$, we will write $B^+(S)$ to be the set of uniformly bounded measurable mappings from $S$ to $[0,\infty)$.

\begin{assumption}\label{ass1}\rm  
The following assumptions will be in force throughout this text.

\begin{itemize}
\item  The rate functions  $\sigma, \varsigma, m   \in B^+(\mathcal{C})$ are uniformly Lipschitz and uniformly bounded away from zero.

\item For each $x\in \mathcal{C}$, the distribution  $\pi(x; \dd \up', \dd u)$ is absolutely continuous with respect to  $\Sigma\times{\rm Leb}_{(0,1]}$,  with measurable density. With a slight abuse of notation, we denote the density by $\pi(x;  \up',   u)$.

\end{itemize}
\end{assumption}

Assumption \ref{ass1} is sufficient to ensure the existence and uniqueness of a   strong solution to \eqref{VSDE}; see Appendix B for further details. 
 Moreover, it is also standard that the solution to \eqref{VSDE} is a strong Markov process. As such, in the spirit of standard Markov process theory, we can introduce the probabilities 
$
\mathbb{P} = (\mathbb{P}_{x}, x = (\epsilon, r,\up)\in \mathcal{C})
$
to describe the distributions of the family of processes $Y=(Y_\ell, \ell\geq0)$ indexed by initial condition $x \in \mathcal{C}$.

In Section \ref{specifics} below we discuss some of the specific choices for the functionals $\sigma, \varsigma, m, \pi$ inspired by the physics literature that permit $Y$ to qualitatively replicate the behaviour of a sequential proton track, and how the solution to \eqref{VSDE} can be simulated.  For now, we press on to develop some of the underlying theory in relation to our generalisation of the Bragg peak in more than one dimension.

\section{The Bragg surface}

Whilst \eqref{VSDE} can be used to describe, path-by-path, the evolution in configuration space of individual sequential proton tracks, we need an associated mathematical quantity that will allow us to talk about the overall rate of energy deposition  of the proton beam. Given our earlier remarks identifying the proton beam as a multiple of the mean behaviour of a sequential proton track, as part of our  probabilistic point of view, we should similarly expect to identify energy deposition of the beam across space as an average.

To this end, let us introduce the following operator which may be seen as a path functional of the sequential proton track \eqref{VSDE}:
\begin{equation}
\mathtt{U}[f](x) = -\mathbb{E}_x\left[\int_0^\Lambda f(Y_{\ell-})\dd \epsilon_\ell\right], \qquad x\in  \mathcal{C},  f\in B^+( \mathcal{C}).
\label{Uopp}
\end{equation}
In words, for a sequential proton track issued with configuration $x\in  \mathcal{C}$, the quantity $\mathtt{U}[f](x) $ is the average deposition of energy in space, weighted by the test function $f\in B^+( \mathcal{C})$.

To help visualise the meaning of the definition \eqref{Uopp}, suppose $A\subset D$  is a  small voxel  of interest in  space. If we define $f$ to be  $f(\e, r, \omega) = \mathbf{1}_{\mathcal{E}\times A\times \mathbb{S}_2}(\epsilon, r,\omega)$, then the random variable 
\[
-\int_0^\Lambda f(Y_{\ell-} )\dd \epsilon_\ell
\]
is the total energy deposited in $A$ by the sequential proton track $Y$. 
If, moreover,  $n_0$ is the number of independent sequential proton tracks that constitute an individual proton beam that are all initiated with initial configuration $x\in\mathcal{C}$, then $n_0\mathtt{U}[\mathbf{1}_{\mathcal{E}\times A\times \mathbb{S}_2}](x) $ gives precisely the average energy deposition of the beam in $A$.

 Note that the negative sign in \eqref{Uopp} is necessary as  the energy of a sequential proton track is decreasing, while energy deposition is a non-negative quantity. Decrements in the energy of a sequential proton track are thus converted to an increment in energy deposition via a change of sign. We also note that $\mathtt{U}[f](x)$ is well defined as a finite object when we take account of the simple inequality
\[
\mathtt{U}[f](x) \leq  
\epsilon_0 \sup_{y\in \mathcal{C}}f(y) \le \mathtt{e}_\mathtt{max}\sup_{y\in \mathcal{C}}f(y) <\infty,
\]
where $\epsilon_0$ is the initial energy corresponding to the initial configuration  $x = (\e_0, r_0, \omega_0)$.

\subsection{Absolute continuity of energy deposition}\label{AC}
Recall that energy depletes continuously at rate $\varsigma$ and discontinuously by a proportion $u\in(0,1]$ at rate $\sigma (x) \pi (x; \mathbb{S}_2, \dd u)$, which includes  the possibility of the sequential proton track reaching its range at rate $q(x)=\sigma (x) \pi (x; \mathbb{S}_2, \{1\})$.  The compensation formula for optional random measures thus yields, for $f \in B^+( \mathcal{C})$, $x\in \mathcal C$, 
\begin{align}
\mathtt{U}[f](x) &=  \mathbb{E}_x\left[\int_0^{\kappa_D} f(Y_{\ell}) {\rm e}^{-\int_0^\ell q(Y_l)\dd l }\varsigma(Y_{\ell})\dd \ell\right] \\
&\hspace{1cm}+ \mathbb{E}_x\left[\int_0^{\kappa_D} \int_{(0,1)}f(Y_\ell)u\epsilon_{\ell}{\rm e}^{-\int_0^\ell q(Y_l)\dd l }\,  \sigma (Y_{\ell}) \pi (Y_{\ell}; \mathbb{S}_2, \dd u) \dd \ell\right] \notag\\
&\hspace{2cm} +  \mathbb{E}_x\left[\int_0^{\kappa_D}f(Y_\ell) \epsilon_{\ell}q(Y_\ell){\rm e}^{-\int_0^\ell q(Y_l)\dd l }\,  \dd \ell\right]\notag\\
&=  \mathbb{E}_x\left[\int_0^{\Lambda} f(Y_{\ell}) \varsigma(Y_{\ell})\dd \ell\right] + \mathbb{E}_x\left[\int_0^{\Lambda} \int_{(0,1]}f(Y_\ell)u\epsilon_{\ell}\,  \sigma (Y_{\ell}) \pi (Y_{\ell}; \mathbb{S}_2, \dd u) \dd \ell\right] \notag\\
&= \int_{ \mathcal{C}} f(\epsilon, r, \Omega)\left\{  \varsigma(\epsilon, r, \Omega) 
+\epsilon\int_{(0,1]}u \sigma (\epsilon, r, \Omega) \pi ((\epsilon, r, \Omega); \mathbb{S}_2, \dd u)
 \right\}
\mathtt{R}(x; \dd \epsilon\, \dd r\, \dd\Omega),
\label{killinginout}
\end{align}
where $\kappa_D=\inf\{\ell>0: r_\ell \not\in D\}$, and 
\[
\mathtt{R}(x; \dd y) = \mathbb{E}_x\left[\int_0^{\Lambda} \mathbf{1}_{(Y_\ell\in \dd y)} \dd \ell\right], \qquad x, y \in \mathcal{C},
\]
is the resolvent of $Y$ with respect to track length. The following result, despite  seeming to be a  technical distraction, is crucial to how we will describe the rate of energy deposition, thereby generalising the notion of Bragg peak. In what follows, we use $C^\infty$ to denote the set of smooth functions on $\mathcal C$ and $C^\infty_b \subset C^\infty$ to denote those functions $f$ in $C^\infty$ such that for any multi-index $\alpha$, $\partial_x^\alpha f(x)$ is bounded.

\begin{theorem}\label{density}
Suppose that  Assumption \ref{ass1} holds and that $\varsigma, m \in C_b^\infty$. Further assume that $\varsigma(\epsilon, r, \varphi, \vartheta)^{1/2} + \partial^2_{\vartheta\vartheta}(\varsigma(\epsilon, r, \varphi, \vartheta)^{1/2})$ is bounded away from $0$ and $\infty$, where we have written $(\varphi, \vartheta)$ in place of $\omega$ using polar coordinates.
Then \eqref{VSDE} has a strong solution and the resolvent $\mathtt{R}(x; \dd y)$ has a smooth density with respect to ${\rm Leb}_\mathcal{E}\times{\rm Leb}_D\times \Sigma $. That is, for each $x\in  \mathcal{C}, \epsilon\in\mathcal{E}, r\in D, \Omega\in\mathbb{S}_2$, we have
\begin{equation}
\mathtt{R}(x; \dd \epsilon \, \dd r\, \dd \Omega )= \mathtt{r}(x,(\epsilon, r,\Omega))\, {\rm Leb}_\mathcal{E}( \dd\epsilon)\, {\rm Leb}_D(\dd r)\,\Sigma(\dd\Omega),
\label{BBoccupation}
\end{equation}
where $\mathtt{r}(x, (\epsilon, r,\Omega))$ is a density in $\mathcal{C}^\infty$.
\end{theorem}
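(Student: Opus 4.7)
The plan is to establish a Hörmander-type hypoellipticity result by Malliavin calculus for each realisation of the inter-jump diffusive dynamics, and then to express the resolvent as an average of smooth transition densities. Conditioning on the marked point process $N$, write $0=T_0\leq T_1\leq T_2\leq \cdots$ for the jump times and $Y^{(n)}$ for the continuous part on $[T_n, T_{n+1})$. Then
\begin{equation*}
\mathtt{R}(x; A) = \mathbb{E}_x\bigg[\sum_{n\geq 0}\int_{T_n}^{T_{n+1}\wedge \Lambda} \mathbf{1}_A\qp{Y^{(n)}_\ell}\dd\ell\bigg],
\end{equation*}
so a jointly smooth transition density (in starting and terminal points) for each diffusive segment, together with the almost-sure finiteness of the number of jumps before absorption (bounded $\sigma$, bounded $D$), will yield smoothness of $\mathtt{R}$. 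Existence and uniqueness of the strong solution itself is handled in Appendix B and follows from the uniform Lipschitz / boundedness hypotheses of Assumption \ref{ass1}.

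\textbf{Malliavin setup and Hörmander condition.} On each inter-jump segment, the continuous dynamics are driven by the spherical Brownian increment in $\Omega$ only. Writing the SDE in Stratonovich form, the drift field is $V_0 = \Omega\cdot\nabla_r - \varsigma(y)\partial_\epsilon$ plus the Itô--Stratonovich correction on $\mathbb{S}_2$, and the diffusion fields $V_1, V_2$ have magnitude $m(y)$ and span $T_\Omega\mathbb{S}_2$ (the geometry being unpacked in Appendix A). Hörmander's condition for the $6$-dimensional state space $\mathcal{C}$ is then verified as follows: $V_1, V_2$ directly supply the two $T_\Omega\mathbb{S}_2$ directions; the brackets $[V_i, V_0]$ contain $(V_i\Omega)\cdot\nabla_r$ which yields two independent $\partial_r$ directions, with the third $\partial_r$ direction recovered by one further iterated bracket; and the same brackets contain $(V_i\varsigma)\partial_\epsilon$, giving access to $\partial_\epsilon$ provided $\varsigma$ has nontrivial angular variation. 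The hypothesis that $\varsigma^{1/2} + \partial^2_{\vartheta\vartheta}\qp{\varsigma^{1/2}}$ is bounded away from $0$ and $\infty$ is the quantitative replacement: it is precisely the combination that appears when one computes the Bismut-type variance controlling the $\epsilon\epsilon$-entry of the Malliavin covariance matrix, ensuring inverse moments of all orders.

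\textbf{From Malliavin smoothness to the density.} With $\varsigma, m \in C_b^\infty$ and the above non-degeneracy of the Malliavin matrix, the standard criterion (see Nualart, Ch.\ 2) yields that each $Y^{(n)}_\ell$ has a $C^\infty$ density, and smooth dependence on the initial condition $x$ follows from smoothness of the stochastic flow in the starting point (again guaranteed by $\varsigma, m \in C_b^\infty$ together with the smooth jump kernel of Assumption \ref{ass1}). Integrating over $\ell$ and summing the resulting smooth kernels against the compensator $\sigma(x)\pi(x; \dd\Omega', \dd u)\dd\ell$ of $N$ preserves smoothness, producing the density $\mathtt{r}(x, (\epsilon, r, \Omega))\in C^\infty$ asserted in \eqref{BBoccupation}.

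\textbf{Main obstacle.} The principal difficulty is the propagation of noise from $\Omega$ into the $\epsilon$ coordinate. Because the continuous part of $\epsilon$ is strictly deterministic given the trajectory (integrating $-\varsigma(Y)$), all randomness in $\epsilon$ arrives via the angular dependence of $\varsigma$, and the corresponding block of the Malliavin matrix is subtle. The specific form $\varsigma^{1/2}+\partial^2_{\vartheta\vartheta}\qp{\varsigma^{1/2}}$ in the hypothesis is what makes the Bismut integration-by-parts argument quantitative, and translating this into a uniform inverse-moment bound on the Malliavin determinant is the main technical step of the proof.
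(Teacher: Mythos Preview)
Your decomposition over jump times is the same as the paper's, and the piecewise construction plus strong existence is fine. The gap is in how you obtain a density for the inter-jump diffusion. You attempt to run H\"ormander/Malliavin directly on the six-dimensional process $(\epsilon_\ell, r_\ell, \Omega_\ell)$ at fixed track length $\ell$, and you claim access to $\partial_\epsilon$ via $(V_i\varsigma)\partial_\epsilon$ from the bracket $[V_i,V_0]$. But this fails outright when $\varsigma$ has no angular dependence: take $\varsigma$ constant, which certainly satisfies the stated hypothesis since then $\varsigma^{1/2}+\partial^2_{\vartheta\vartheta}(\varsigma^{1/2})=\varsigma^{1/2}>0$. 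In that case $\epsilon_\ell=\epsilon_0-\varsigma\ell$ is deterministic given $\ell$, the law of $Y_\ell$ is supported on the hypersurface $\{\epsilon=\epsilon_0-\varsigma\ell\}$, and no six-dimensional transition density exists for you to integrate. Your reading of the hypothesis as a Bismut-type bound on the $\epsilon\epsilon$-entry of the Malliavin matrix is therefore not correct; no such bound can hold in this example.

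The paper circumvents this by a time change that you are missing. Since $\epsilon_\ell$ is strictly decreasing, one reparametrises by dispelled energy $\varepsilon=\epsilon_0-\epsilon$, so that the resolvent integral $\int_0^\Lambda(\cdots)\dd\ell$ becomes $\int_0^{\epsilon_0}(\cdots)\varsigma^{-1}\dd\varepsilon$. One then only needs a five-dimensional density for $(R_\varepsilon,\mathcal{S}_\varepsilon):=(r_{E^{-1}_\varepsilon},\Omega_{E^{-1}_\varepsilon})$ at each fixed energy level $\varepsilon$, for which a time-inhomogeneous H\"ormander theorem (Cattiaux--Mesnager) applies. After the time change, the augmented drift field is $\varsigma^{1/2}\partial_\epsilon+Z_0$ with $Z_0$ the transport field in polar coordinates; the combination $\varsigma^{1/2}+\partial^2_{\vartheta\vartheta}(\varsigma^{1/2})$ is \emph{exactly} the coefficient of $\partial_\epsilon$ in the linear combination $(\varsigma^{1/2}\partial_\epsilon+Z_0)+[Y_2,[Y_2,\varsigma^{1/2}\partial_\epsilon+Z_0]]$, where $Y_2=\partial_\vartheta$, and its nonvanishing is what isolates the time direction in the Lie algebra. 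The square root appears because the diffusion coefficients of the time-changed SDE carry a factor $\varsigma^{-1/2}$, so the natural rescaling of the vector fields introduces $\varsigma^{1/2}$ in front of $\partial_\epsilon$. This is an algebraic identity in the bracket computation, not a moment bound on a covariance matrix.
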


\begin{remark}\label{rem:smoothness}\rm
It is possible to relax the regularity assumptions on the cross sections in the above theorem at the expense of a less regular density. For example, one could instead assume that the cross sections are only $k$ times continuously differentiable for some $k \ge 2$, which would result in the existence of a density $\mathtt r(x, \cdot)$ that is $\ell$ times continuously differentiable for some $\ell \le k$. This is an unnecessary luxury as the current assumptions comfortably allow for the requirements of any real-world model.
It therefore lies beyond the purpose of this paper and we leave the details to the interested reader. 
\end{remark}

\begin{remark}\rm We also note that, if we define the   resolvent operator 
\[
\mathtt{R}[f](x) = \int_{\mathcal{C}} f(\epsilon, r,\Omega)\mathtt{r}(x,(\epsilon, r,\Omega))\, {\rm Leb}_\mathcal{E}( \dd\epsilon)\, {\rm Leb}_D(\dd r)\,\Sigma(\dd\Omega), \qquad  x\in\mathcal{C},
\] 
then a  standard Feyman--Kac heuristic dictates that, in an appropriate analytical sense (which we do not discuss further), $\mathtt{R}[f](x) $ is associated to the equation
\[
\mathcal{L}\mathtt{R}[f](x)+ f(x) = 0, \qquad x\in \mathcal{C},
\]
with $\mathtt{R}[f](\epsilon, r, \Omega) = 0$ whenever $r\in \partial D$, and where, for $x= (\epsilon, r, \Omega)\in\mathcal{C}$,
\begin{align*}
\mathcal{L}g(x) &= \Omega\cdot\nabla_r g(x) - \varsigma(x)\frac{\partial}{\partial \epsilon}g(x) + m(x)\triangle_\Omega g(x)\notag\\
&\hspace{1cm}
+ \sigma(x)\int_\mathcal{C} \Big(g((1-u)\epsilon, r, \Omega')-g(x)\Big)\pi (x; \dd \up', \dd u),
\end{align*}
 $\nabla_r$ is the gradient operator with respect to the variable $r\in D$, and $\triangle_\Omega$ is the Laplace--Beltrami operator with respect to the variable $\Omega\in\mathbb{S}_2$. Similar observations for both general and other specific forms of Boltzmann transport equations have been recently been discussed in \cite{otherSDE} and \cite{nukebook}.
\end{remark}

Thanks to the conclusion of Theorem \ref{density},  under its assumptions, we can now associate a density to the operator $\mathtt{U}: B^{+}(\mathcal{C})\to B^{+}(\mathcal{C})$. In particular we can identify the density $\mathtt{u}(x, y)$, $x,y\in \mathcal{C}$ via
\begin{equation}
\mathtt{U}[f](x) = \int_{\mathcal{C}} f(\epsilon, r,\Omega)\mathtt{u}(x,(\epsilon, r,\Omega))\, {\rm Leb}_\mathcal{E}( \dd\epsilon)\, {\rm Leb}_D(\dd r)\,\Sigma(\dd\Omega),
\label{putinBB}
\end{equation}
where 
\[
\mathtt{u}(x, (\epsilon, r, \Omega))=
\left\{ \varsigma(\epsilon, r, \Omega) 
+\e \sigma (\epsilon, r, \Omega)\int_{(0,1]} u \pi ((\epsilon, r, \Omega); \mathbb{S}_2, \dd u)\right\}
\mathtt{r}(x,(\epsilon, r, \Omega)).
\]

\begin{lemma}\label{bdddensity}
The density $\mathtt{u}(x,y)$, $x,y\in\mathcal{C}$ is uniformly bounded.
\end{lemma}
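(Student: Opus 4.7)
The plan is to exploit the explicit factorization supplied by the lemma's setup, writing
\[
\mathtt{u}(x, y) = h(y)\,\mathtt{r}(x, y), \qquad y=(\epsilon,r,\Omega)\in\mathcal{C},
\]
where
\[
h(y) := \varsigma(y) + \epsilon\, \sigma(y) \int_{(0,1]} u\, \pi(y; \mathbb{S}_2, \dd u),
\]
and to bound the two factors separately. The factor $h$ is uniformly bounded as an immediate consequence of Assumption \ref{ass1}: the rates $\varsigma$ and $\sigma$ lie in $B^+(\mathcal{C})$, we have $\epsilon\leq\mathtt{e}_\mathtt{max}$, and because $\pi(y;\mathbb{S}_2,(0,1])\leq 1$ as a marginal of a sub-probability measure, we get $\int_{(0,1]} u\,\pi(y;\mathbb{S}_2,\dd u)\leq 1$. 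Combining yields $\|h\|_\infty\leq \|\varsigma\|_\infty + \mathtt{e}_\mathtt{max}\|\sigma\|_\infty<\infty$.

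It therefore suffices to bound $\mathtt{r}(x,y)$ uniformly in $x,y\in\mathcal{C}$. A useful preliminary observation is that the track length is uniformly bounded: since $\varsigma\geq c_0>0$ by Assumption \ref{ass1}, comparing the continuous component of the energy decrement in \eqref{VSDE} with the initial energy gives
\[
c_0 \Lambda \leq \int_0^\Lambda \varsigma(Y_l)\,\dd l \leq \epsilon_0 \leq \mathtt{e}_\mathtt{max},
\]
so $\Lambda\leq \mathtt{e}_\mathtt{max}/c_0 =: T$ almost surely. Theorem \ref{density} provides a smooth density $\mathtt{r}(x,\cdot)$ on the compact set $\mathcal{C}$, which is, in particular, bounded in the second variable for each fixed $x$.

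The main obstacle is promoting this pointwise-in-$x$ bound to a bound uniform in the starting configuration. The plan is to inspect the Malliavin-calculus argument supporting Theorem \ref{density}: the resulting density estimate involves $L^p$ moments of the Malliavin derivatives of $Y$ and of the inverse Malliavin covariance matrix associated with \eqref{VSDE}. Because the coefficients $\varsigma, m, \sigma, \pi$ of \eqref{VSDE} are uniformly Lipschitz and uniformly bounded (Assumption \ref{ass1}), and because the time horizon $T$ is finite and independent of $x$, these Malliavin quantities inherit bounds depending only on the coefficient norms and on $T$. This delivers $\sup_{x,y\in\mathcal{C}} \mathtt{r}(x,y)<\infty$, which, combined with the uniform bound on $h$, yields the claim.
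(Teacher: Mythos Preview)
Your argument follows the same route as the paper's: factorize $\mathtt{u}(x,y)=h(y)\,\mathtt{r}(x,y)$, bound $h$ directly from Assumption~\ref{ass1}, and use $\Lambda\leq \mathtt{e}_\mathtt{max}/\inf_{x}\varsigma(x)$ to control $\mathtt{r}$. The paper's proof is considerably terser at the last step --- it simply asserts that bounded $\Lambda$ yields a uniformly bounded $\mathtt{r}$ --- whereas you supply additional justification by appealing to uniformity (in the initial point) of the Malliavin estimates underlying Theorem~\ref{density}. That elaboration is in the right spirit; note, however, that in the paper the H\"ormander/Malliavin machinery is applied only to the diffusive SDE \eqref{SDE}, and the density for the full jump SDE \eqref{VSDE} is obtained by the piecewise decomposition \eqref{inner}, so a fully rigorous version of your uniformity claim would combine a uniform bound on the diffusive resolvent density $\rho^{D,q+\sigma}$ with the bound $\mathbb{E}_{z_0}\big[\sum_{n\geq 0}\mathbf{1}_{(L_n<\Lambda)}\big]\leq 1+\|\sigma\|_\infty\,\mathtt{e}_\mathtt{max}/\inf\varsigma$ on the expected number of jump excursions.
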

\begin{proof}
Since $\varsigma$ is assumed uniformly bounded away from zero, we have from the form of \eqref{VSDE} that 
$
\Lambda \leq {\e_0}/{\underline{\varsigma}}
$
where $\underline\varsigma = \inf_{x\in\mathcal{C}}\varsigma(x)>0$. As such, $\mathtt{r}(x, y)$ is uniformly bounded. The proof now follows from Assumption \ref{ass1}.
\end{proof}
\subsection{Connection to the Bethe--Bloch formula}

In the theory of subatomic  nuclear physics, the Bethe--Bloch formula describes the mean energy loss per distance travelled of  charged particles (typically protons, alpha particles, atomic ions) traversing matter. Alternatively this can be seen as the stopping power of the material through which they travel.

The Bethe--Bloch formula, in its most general form, tells us that for a particle with energy $\epsilon$, travelling a distance $\ell$ into a target material,
\begin{equation}
-\left\langle {\frac {\dd\e}{\dd\ell}}\right\rangle=
4\pi
\rho
N_A
r_e^2 
m_e
c^2 
\frac{Z}{A}\frac{z^2}{\beta^2}
\left[\ln \left({\frac {2m_{e}c^{2}\beta ^{2}}{I (1-\beta ^{2})}}\right)-\beta^{2} -\frac{\delta}{2}-\frac{C}{Z}\right],
\label{realBB}
\end{equation}
where, apart from the minus sign, we read the left-hand side as a single notation meaning the mean energy loss per distance travelled of a charged particle. For the terms on the right-hand side, $\rho$ is the density of the material, $N_A$ is Avogadro’s number, $r_e$ is the classical electron radius, $m_e$ is the mass of an electron, $c$ is speed of light, $Z$ is the atomic number of the absorbing medium, $A$ is the atomic weight of the absorbing material, $z$ is the charge of the projectile, $\beta = \upsilon/c$ where $\upsilon$ is the speed of the projectile, $I$ is the mean excitation energy, $\delta$ is the density corrections arising from the shielding of remote electrons by close electrons and will result in a reduction of energy loss at higher energies, and $C$ is the shell correction item, which is important only for low energies where the particle velocity is near the velocity of the atomic electrons. Some of the  terms in the Bethe--Bloch equation involve relativistic theory and quantum mechanics and need to be considered when very high or very low proton energies are used in calculations.  
 
Although extremely involved, the take-home message from the Bethe--Bloch formula is that the projectile’s characteristics govern its energy loss rate. That is to say, 
each of the quantities $\rho, I, N_A, r_e, m_e, z, Z, A, c, \beta, \upsilon, \delta, C$ are either physical constants, or functions of  the particle's  energy-position-direction configuration, $(\epsilon, r, \Omega)$. The connection between this formula and our SDE \eqref{VSDE} is given through the simple relation
\begin{equation}
\left\{ \varsigma(\epsilon, r, \Omega) 
+\e \sigma (\epsilon, r, \Omega)\int_{(0,1]} u  \pi ((\epsilon, r, \Omega); \mathbb{S}_2, \dd u)\right\} = - \left\langle {\frac {\dd \epsilon} {\dd\ell}} \right\rangle,
%\approx C_{{\rm ne}}(r)\frac{1}{\upsilon(y)^2}, 
\label{BB1}
\end{equation}
for $(\epsilon, r,\omega)\in \mathcal{C}$. Indeed, both the left- and right-hand sides correspond to the average rate of energy loss per unit length, per particle.
As such, the expression \eqref{putinBB} now reads 
\[
\mathtt{U}[f](x) =- \int_{\mathcal{C}} f(\epsilon, r,\Omega) \left\langle {\frac {\dd \epsilon} {\dd\ell}} \right\rangle\mathtt{r}(x, (\epsilon, r,\Omega))\, {\rm Leb}_\mathcal{E}( \dd\epsilon)\, {\rm Leb}_D(\dd r)\,\Sigma(\dd\Omega).
\]
That is, the total energy deposition, weighted by the test function $f$, is computed by integrating $f$ against a density consisting of the instantaneous Bethe--Bloch rate of energy deposition per unit track length multiplied by the mean occupation density of tracks over all possible configurations.

%In the Appendix, we prove the following result, which verifies that, under mild conditions, the resolvent density does indeed exist. 

%\begin{theorem}
%Suppose that $\pi_{\rm e}(x: \dd \up')$ and $\pi_{{\rm ne}}(x: \dd \up', \dd u)$ are absolutely continuous with respect to $\dd \up'$ and $\dd \up'\dd u$, respectively, then there exists a measure $\mu$ with respect to which there is a resolvent density $\mathtt{r}(x,y)$, $x,y\in  \mathcal{C}$ (and hence the density $\mathtt{u}(x,y)$, $x,y\in  \mathcal{C}$) exists.
%\end{theorem}
%{\color{black} (Not sure this is true unless there is a Langevin component - there will always the initial proton that will contribute an atomic mass to the occupation measure. But may not be that important. It may only need a density in a weak sense, i.e. with respect to a finite measure.)}

\subsection{The Bragg manifold and Bragg surface}\label{BM}
We are interested in the rate at which energy is deposited at individual spatial sites. Let us assume henceforth that the assumptions of Theorem \ref{density} are in force, permitting the existence of a density $\mathtt{u}$.
 There are at least two   different ways in which we can formulate this. 
For example, for a {\it fixed}  direction $\omega\in\mathbb{S}_2$, we might be interested in  the average rate at which energy is deposited at individual spatial sites in the direction of $\omega$. %, say  $\mathtt{1}\in \mathbb{S}_2$. 
 This corresponds to setting (with an abuse of notation) $f(\epsilon, r,\omega) = f(r)\in B^+( D)$ and calculating, with the help of the dominated convergence theorem, %, provided $\omega\cdot\nabla_r$ exists in $L_2({\rm Leb}_\mathcal{E}\times{\rm Leb}_D\times \Sigma)$, 
 for $x\in\mathcal{C}$, 
\begin{align}
% &\lim_{\varepsilon\to0}\frac{1}{\varepsilon}\Big(\mathtt{U}[{f}(\cdot + \varepsilon \omega)](x)- \mathtt{U}[{f}](x) \Big)\notag\\
D_\omega[{f}](x)&: = - \lim_{\delta\to0}\frac{1}{\delta} 
\mathbb{E}_x\left[\int_0^\Lambda \Big({f}(r_\ell + \delta\omega)-{f}(r_\ell)\Big) \dd \epsilon_\ell\right]\notag\\
&=- \mathbb{E}_x\left[\int_0^\zeta \omega\cdot \nabla_r {f}(r_\ell) \dd \epsilon_\ell\right] \notag\\
&= \int_{ \mathcal{C}}\omega\cdot\nabla_r{f}(r)\,\mathtt{u}(x, (\e, r,\Omega))\, {\rm Leb}_\mathcal{E}( \dd\epsilon)\, {\rm Leb}_D(\dd r)\,\Sigma(\dd\Omega),
\label{eg1}
\end{align}
where we assume that $f$ is sufficiently smooth to pass through the gradient operator (e.g.\ $f$ is continuously differentiable  in $D$ with bounded derivatives). 
Another case in point is that we may also be interested in the average rate at which energy is deposited in the instantaneous direction of travel of the sequential proton track. In that case, we may replace the fixed $\omega$ by the variable $\Omega_{\ell}$ and a similar calculation yields, for $x\in\mathcal{C},$ 
\begin{align}
D[{f}](x)&:=
- \lim_{\delta\to0}\frac{1}{\delta} 
\mathbb{E}_x\left[\int_0^\Lambda \Big({f}(r_\ell + \delta \Omega_\ell)-{f}(r_\ell)\Big) \dd \epsilon_\ell\right]\notag\\
& = \int_{\mathcal{C}} \Omega\cdot\nabla_r{f}(r )\,\mathtt{u}(x, (\e, r,\Omega))\, {\rm Leb}_\mathcal{E}( \dd\epsilon)\, {\rm Leb}_D(\dd r)\,\Sigma(\dd\Omega),
\label{eg2}
\end{align}
where, again, we assume that $f$ is sufficiently smooth. % {\color{black} Emma: I believe this is what you simulated.}

To put this in a common framework, let us introduce some notation. For convenience, let us write  for short 
\[
\langle f , \mathtt{u}(x,\cdot)\rangle :=  \int_{\mathcal{C}} f(\e, r, \Omega)\,\mathtt{u}(x, (\e, r,\Omega))\, {\rm Leb}_\mathcal{E}( \dd\epsilon)\, {\rm Leb}_D(\dd r)\,\Sigma(\dd\Omega),
\]
where $\langle\cdot,\cdot\rangle$ stipulates an $L^2({\rm Leb}_\mathcal{E}\times{\rm Leb}_D\times \Sigma)$ functional inner product in the usual way on $\mathcal{C}$ with respect to the reference measure ${\rm Leb}_\mathcal{E}\times{\rm Leb}_D\times \Sigma$.

Thanks to the existence of the density $\mathtt{u}$, the two examples \eqref{eg1} and \eqref{eg2} can now be more conveniently written as 
\begin{equation}
D_\omega[{f}](x) =-\langle \mathtt{T}^{\omega} [ {f}], \mathtt{u}(x,\cdot) \rangle\quad \text{ and }\quad D[{f}](x) =
-\langle \mathtt{T}[ {f}], \mathtt{u}(x,\cdot) \rangle,
\label{2egs}
\end{equation}
where we have defined the $L^2({\rm Leb}_\mathcal{E}\times{\rm Leb}_D\times \Sigma)$ transport  operators $\mathtt{T}^\omega[ f](\epsilon, r,\Omega) = -\omega\cdot\nabla_r f(\epsilon, r, \Omega)$ (transport in the {\bf exogenous} direction $\omega$) and $\mathtt{T} [ f](\epsilon, r,\Omega) =- \Omega\cdot\nabla_r f(\epsilon, r, \Omega)$ (transport in the  {\bf endogenous} direction $\Omega$), for appropriate test functions $f$.

To be more precise about what we mean by `appropriate test functions', let us introduce some further notation.
We define $\Gamma^+$ and $\Gamma^-$ via
\[
\Gamma^\pm = \{ (r, \Omega) \in \partial D \times \mathbb S_2 : \pm \Omega\cdot\mathbf{n}_r>0\},
\]
where,  recalling that $D$ is convex, $\mathbf{n}_r$ is the outward facing normal to $r\in\partial D$. 
From this we may define  ${\rm Dom}^+$ and ${\rm Dom}^-$ via
\begin{align*}
{\rm Dom}^\pm : = \{f\in L^2({\rm Leb}_\mathcal{E}\times{\rm Leb}_D\times \Sigma)\,: \,&\Omega\cdot\nabla_r f(\e,r,\Omega)\in L^2({\rm Leb}_\mathcal{E}\times{\rm Leb}_D\times \Sigma)\notag\\
&\hspace{4cm} \text{ and } f|_{\mathcal{E}\times\Gamma^\pm} =0\}.
\end{align*}

\begin{lemma}\label{switchlemma}
For $g\in {\rm Dom}^+$ and $(\e, r, \Omega) \in \mathcal C$, define the 
%$L^2({\rm Leb}_\mathcal{E}\times{\rm Leb}_D\times \Sigma)$-adjoint  
operators 
\[
\hat{\mathtt{T}}^\omega[g](\epsilon, r,\Omega) = \omega\cdot\nabla_r g(\epsilon, r,\Omega) \quad \text{and} \quad \hat{\mathtt{T}}[g](\epsilon, r,\Omega) = \Omega\cdot\nabla_r g(\epsilon, r,\Omega).
\] 
Then, for $f \in {\rm Dom}^-$, we may re-write the derivatives in \eqref{2egs} as 
\begin{equation}
 D_\omega[{f}](x) = -\langle {f}, \hat{\mathtt T}^\omega[\mathtt{u}(x,\cdot)] \rangle\quad \text{ and }\quad
D[{f}](x) =-\langle  {f}, \hat{\mathtt{T}}[\mathtt{u}(x,\cdot) ]\rangle,\qquad x\in \mathcal{C},
\label{2egs*}
\end{equation}
respectively.
 \end{lemma}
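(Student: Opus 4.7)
The plan is to derive both identities from a single divergence-theorem integration by parts in the spatial variable $r\in D$, treating $(\epsilon,\Omega)$ as parameters and using the boundary conditions built into ${\rm Dom}^\pm$ to annihilate the resulting flux. Since the endogenous and exogenous cases are mechanically identical after swapping the moving direction $\Omega$ for the fixed vector $\omega$ in the gradient, I would carry out the calculation for $\mathtt{T}$ and remark at the end that it transcribes to $\mathtt{T}^\omega$ verbatim.

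By Theorem \ref{density}, $\mathtt{u}(x,\cdot)$ is smooth on $\mathcal{C}$, so Fubini applies and for each fixed $(\epsilon,\Omega)\in\mathcal{E}\times\mathbb{S}_2$ the classical divergence theorem on the convex domain $D$, together with the product rule $\nabla_r\cdot(\Omega f\mathtt{u}) = (\Omega\cdot\nabla_r f)\mathtt{u} + f(\Omega\cdot\nabla_r\mathtt{u})$, yields
\[
\int_D (\Omega\cdot\nabla_r f)\mathtt{u}\,{\rm Leb}_D(\dd r) = \int_{\partial D} f\mathtt{u}(\Omega\cdot\mathbf{n}_r)\,\dd S -\int_D f(\Omega\cdot\nabla_r\mathtt{u})\,{\rm Leb}_D(\dd r).
\]
Integrating against ${\rm Leb}_\mathcal{E}\times\Sigma$ then recasts the flux as an integral of $f\mathtt{u}(\Omega\cdot\mathbf{n}_r)$ over $\mathcal{E}\times\partial D\times\mathbb{S}_2$. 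Splitting the $(r,\Omega)$-domain along $\Gamma^\pm$ according to the sign of $\Omega\cdot\mathbf{n}_r$, the hypothesis $f\in{\rm Dom}^-$ forces $f=0$ on $\mathcal{E}\times\Gamma^-$ and the (to-be-justified) membership $\mathtt{u}(x,\cdot)\in{\rm Dom}^+$ forces $\mathtt{u}(x,\cdot)=0$ on $\mathcal{E}\times\Gamma^+$, so both boundary pieces vanish. Reinserting the minus signs from the definitions of $\mathtt{T}$ and $\hat{\mathtt{T}}$ gives $\langle\mathtt{T}[f],\mathtt{u}(x,\cdot)\rangle=\langle f,\hat{\mathtt{T}}[\mathtt{u}(x,\cdot)]\rangle$, which combined with \eqref{2egs} delivers the endogenous identity in \eqref{2egs*}. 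The exogenous case is identical: only the scalar $\omega\cdot\mathbf{n}_r$ changes in the flux, while the $\Gamma^\pm$ split remains keyed to $\Omega\cdot\mathbf{n}_r$, so the same vanishing does the required work.

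The nontrivial ingredient is the tacit claim $\mathtt{u}(x,\cdot)\in{\rm Dom}^+$, i.e.\ that the trace of $\mathtt{u}(x,\cdot)$ on $\mathcal{E}\times\Gamma^+$ vanishes. Probabilistically this is natural: at a boundary configuration $(r,\Omega)\in\Gamma^+$ the direction of motion points strictly outward of the convex domain $D$, so the process $Y$ exits $D$ instantaneously at such configurations and accumulates no occupation mass there. I would make this rigorous by estimating, for $r$ in the interior of $D$ close to $\partial D$ with $\Omega\cdot\mathbf{n}_r>0$, the dwell time of $Y$ in a small neighbourhood of $(\epsilon,r,\Omega)$ before the exit time $\kappa_D$, showing that it is $O({\rm dist}(r,\partial D))$ uniformly; the smoothness of $\mathtt{r}(x,\cdot)$, and hence of $\mathtt{u}(x,\cdot)$, supplied by Theorem \ref{density} then promotes this bound to a continuous-extension-by-zero on $\mathcal{E}\times\Gamma^+$. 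This trace verification is the only step that goes beyond routine divergence-theorem calculus and is where I expect the real work to lie.
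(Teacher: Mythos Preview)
Your proposal is correct and follows essentially the same route as the paper: integration by parts via the divergence theorem, splitting the resulting surface integral over $\partial D\times\mathbb{S}_2$ according to $\Gamma^\pm$, and killing the two pieces using $f|_{\mathcal{E}\times\Gamma^-}=0$ and $\mathtt{u}(x,\cdot)|_{\mathcal{E}\times\Gamma^+}=0$. The paper likewise identifies the membership $\mathtt{u}(x,\cdot)\in{\rm Dom}^+$ as the only substantive point and dispatches it with the same probabilistic observation you give---that a trajectory initiated at $(r,\Omega)\in\Gamma^+$ exits $D$ instantaneously, so $\mathbb{P}_{(\epsilon,r,\Omega)}(\Lambda=0)=1$ and hence the resolvent density vanishes there---combined with the smoothness and boundedness of $\mathtt{u}$ from Theorem~\ref{density} and Lemma~\ref{bdddensity}; your dwell-time estimate is a slightly more explicit version of the same idea.
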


\begin{proof}Let us consider the case of $D[f](x)$. The case of $D_\omega[f](x)$ can be proved similarly.
We first show that $\mathtt{T}$ and $\hat{\mathtt{T}}$ are adjoint when restricted to appropriate classes of functions. To this end, for $g \in {\rm Dom}^+$ and $f\in \{h \in L^2({\rm Leb}_\mathcal{E}\times{\rm Leb}_D\times \Sigma) : \Omega \cdot \nabla_r h(\e, r, \Omega) \in L^2({\rm Leb}_\mathcal{E}\times{\rm Leb}_D\times \Sigma)\}$,
\begin{align} 
\langle  \hat{\mathtt{T}}[g] , f \rangle &= \int_{\mathcal{E}\times\partial D\times \mathbb{S}_2} (\Omega\cdot {\bf n}) g(\e, r,\Omega)f(\e, r,\Omega)  \,{\rm d}S\,{\rm Leb}_\mathcal{E}( \dd\epsilon)\, \Sigma(\dd\Omega)  -\langle  g, \Omega\cdot\nabla_r  f \rangle\notag\\
&=\int_{\mathcal{E}\times\partial D^-\times \mathbb{S}_2} (\Omega\cdot {\bf n}) g(\e, r,\Omega)f(\e, r,\Omega)  \,{\rm d}S\,{\rm Leb}_\mathcal{E}( \dd\epsilon)\, \Sigma(\dd\Omega)  -\langle g,  \Omega\cdot\nabla_r  f\rangle,
\label{nabladual}
\end{align}
where $\dd S$ is the surface measure on $\partial D$ and ${\bf n}$ is the associated outward normal. If we further assume that $f|_{\mathcal{E}\times\Gamma^-} =0$ so that $f\in{\rm Dom}^-$, then we see from \eqref{nabladual} that 
\[
\langle  \hat{\mathtt{T}}[g], f \rangle = \langle g, \mathtt{T}[f] \rangle.
\]

Next, if $(\e,r,\Omega)\in\mathcal{E}\times\Gamma^+$, then from the path regularity\footnote{Here we mean that a solution of \eqref{VSDE} which is initiated with  $r\in\partial D$ and $\Omega\cdot\mathbf{n}>0$ (i.e.\ direction of motion is pointing outwards), then the $r_\ell$ component of the SDE will immediately visit the exterior of $D$ and hence will be killed.} of solutions to \eqref{VSDE} it follows that $\mathbb{P}_{(\e,r,\Omega)}(\Lambda = 0)=1$ and hence $\mathtt{r}(\e,r,\Omega)=\mathtt{u}(\e,r,\Omega)=0$. Thus, together with the conclusions of Lemma \ref{bdddensity} and Theorem \ref{density}, it follows that $\mathtt{u}\in {\rm Dom}^+$. Hence, for $f \in {\rm Dom}^-$, we have 
\[
D[{f}](x) =-\langle \mathtt{T}[ {f}], \mathtt{u}(x,\cdot) \rangle %= 
%\langle \Omega\cdot\nabla_r f , \mathtt{u}(x,\cdot)\rangle =  - \langle  f , \Omega\cdot\nabla_r\mathtt{u}(x,\cdot)\rangle
=-\langle  {f}, \hat{\mathtt{T}}[\mathtt{u}(x,\cdot) ]\rangle,
\]
as required.
\end{proof}

The above discussion motivates the following definition, which generalises the concept of the Bragg curve to the setting of what we will call a {\it Bragg manifold}; a functional of the full physical configuration space which describes the rate of energy deposition given the initial beam configuration.
\begin{definition}[Bragg manifold]
For $x \in \mathcal C$, we define the associated \emph{Bragg manifold} via
\[
\boxed{
\mathtt{b}(x,(\epsilon, r,\Omega)) = - \Omega\cdot\nabla_r\mathtt{u}(x, (\epsilon, r,\Omega)), \quad (\e, r, \Omega) \in \mathcal C.
}
\]
\end{definition}

As alluded to above, the Bragg manifold is the average rate of directional energy deposition at configuration $(\epsilon, r,\Omega)$ in the sequential proton track, when issued from a configuration $x\in \mathcal{C}$.

%\begin{figure}[h!]
%  \begin{center}
%\includegraphics[width=8cm]{box}
%\caption{A proton beam fiblack into $D$ is made up of the superposition of many sequential proton tracks which enter from a small domain $\mathcal{D}_0\subset\partial D$ centblack around $r_0\in\partial D$ and injected with direction of travel $\mathtt{1}
%\in \mathbb{S}_2$ and energy  $\epsilon_0\in\mathcal{E}$.}
%    \label{box}
%\end{center}
%\end{figure} 

 Suppose that a beam is fired into the domain $D$ in the direction $\omega_0$ (pointing to the interior of $D$) at the point $r_0\in \partial D$ and with initial energy $\epsilon_0$. To take account of the fact that the proton beam is made up of the superposition of many sequential proton tracks, and that there will be some variability around the entry data $(\epsilon_0, r_0,\omega_0)$, suppose that $\eta(\dd x)$ is a measure focused on a small domain $\mathcal{E}_0\times \mathcal{D}_0\times \mathcal{S}_0$, where $\mathcal{E}_0$ is centred around the desired `per track' energy $\epsilon_0$, $\mathcal{D}_0\subset\partial D$ is a small domain of track entry points surrounding $r_0$ and $\mathcal{S}_0\subset \mathbb{S}_2$ is a small neighbourhood of $\omega_0$ in $\mathbb{S}_2$ accounting for variation in the sequential proton track directions. The measure $\eta$ accounts for the average density of sequential proton tracks initiated at the boundary point centred at $r_0$  and is called the {\it proton beam input measure}. Accordingly this leads us to the definition of a Bragg surface.

\begin{definition}[Bragg surface]\label{def:BS}
For a given proton beam input density $\eta$, we define the associated {\em Bragg surface} via
%the average rate of directional energy deposition at $r\in D$ is given by 
\begin{equation}
\boxed{
\mathtt{B}(r):
%=\int_{\mathcal{E}\times\mathbb{S}_2}\int_{\mathcal{E}_0\times \mathcal{D}_0\times \mathcal{S}_0}({\rm Leb}_\mathcal{E}\times{\rm Leb}_D\times \Sigma)(\dd x)\eta(x)\mathtt{b}(x,(\e, r,\up)) \dd \e\,\dd \Omega
= \langle\eta, \mathtt{b}(\cdot , \mathcal{E}, r, \mathbb{S}_2)\rangle, \quad r \in D,
 }
 \label{BS}
\end{equation}
where we are writing $g(x;\mathcal{E}, r, \mathbb{S}_2)$ as shorthand for $\int_{(\epsilon, \Omega)\in\mathcal{E}\times \mathbb{S}_2}g(x;\epsilon, r, \Omega)\,{\rm Leb}_\mathcal{E}( \dd\epsilon)\, \Sigma(\dd\Omega) $.
\end{definition}

We conclude this section by remarking that our definition for both the Bragg manifold and Bragg surface are new. Moreover, we have been unable to locate any comparable concept within existing literature, other than what is known for one-dimensional models, i.e.\ Bortfeld's approximation \cite{bortfeld}.

\section{Consistency with the Bortfeld derivation in one dimension}

Let us show the consistency of our definition of Bragg surface with what is currently understood as the Bragg curve and in particular, how this aligns with the analytical approximation of the Bragg curve given in \cite{bortfeld}.

\subsection{One-dimensional jump SDE} 

In order to make the comparison,   we  must first reconsider our calculations using \eqref{VSDE} in one dimension, which requires some minor adjustment.

In this setting the domain $D$ is an interval, say  $[0,a]$, and the deposition of energy occurs along it according to some point process. Accordingly, we may assume that particles travel only in one direction. That is to say, we no longer have a need for the variable $\Omega$ (which in one dimension would have to be valued in the two-point set $\{-1,1\}$). Indeed, scattering at elastic and non-elastic events would otherwise consist of  changing direction of movement along the interval $[0,a]$. Irrespective of whether or not this feature is included, the positions at which energy is deposited during non-elastic interactions is still described by a point process, now on $[0, a]$. Similarly, the Brownian component may be ignored as its inclusion only changes the positions  on $[0,a]$ at which energy is deposited, and hence its effect can be accounted for by the rate function of the energy deposition point process.

Because of the reduction in the number of  variables to  depth, $r$, and energy, $\epsilon$,  the SDE \eqref{VSDE} becomes significantly simpler and can now be written in the form
\begin{equation}
\boxed{
\epsilon_\ell = \epsilon_0 -\int_0^\ell  \varsigma(\epsilon_{{r}-}, {r}) \dd {r} - \int_0^\ell \int_{(0,1]}u\epsilon_{{r}-}{N}((\epsilon_{{r}-}, r); \dd {r},  \dd u),
\label{SDE3}
}
\end{equation}
for $0\leq \ell<\Lambda $, 
where 
\[
\Lambda = \inf\{\ell>0: \e_\ell = 0\}\wedge a
\]
 is the total track length, $\varsigma(\e, r)$ is the continuous rate of energy loss due to inelastic Coulomb interaction,
${N}((\epsilon_{\ell-}, r) ; \dd {r},  \dd u)$ is an optional random measure with previsible compensator 
$\sigma(\epsilon_{{r}-}, r)\pi((\epsilon_{{r}-}, r); \dd u)\dd r$
describing the deposition of a proportion  $u\in(0,1]$ of the existing energy in a small $\dd {r}$ with total rate $ \sigma_{{\rm ne}}(\epsilon_{{r}-}, r) $ and probability distribution $\pi( (\epsilon_{{r}-}, r); \dd u)$. As in the higher-dimensional model, we include the possibility that the sequential proton track reaches its range through a killing event at rate  
$\sigma(\epsilon_{{r}-}, r)\pi((\epsilon_{{r}-}, r); \{1\})\dd r$.
An important difference with the higher-dimensional model \eqref{VSDE} is that  the variable corresponding to track length is now, itself, a spatial variable.

Now taking $f: \mathcal E \times [0, a] \to\mathbb[0,\infty)$ and following similar calculations to \eqref{killinginout}, the total expected energy deposition along the sequential proton track is given by
\begin{align}
\mathtt{U}[f](\epsilon_0, 0) 
&= -\mathbb{E}_{(\epsilon_0, 0)}\left[\int_0^{ \Lambda} f(\epsilon_{{r}-}, r)\dd \epsilon_{r}\right] \notag\\
&= \mathbb{E}_{(\epsilon_0, 0)}\left[\int_0^{ \Lambda} f(\epsilon_{{r}}, r)\left\{ \varsigma(\epsilon_{{r}}, r)+ \sigma (\epsilon_{{r}}, r)  \int_{(0,1]} u \epsilon_{{r}}\pi ((\epsilon_{{r}}, r); \dd u)
 \right\}\dd {r}\right] \notag\\
&=-\int_0^a\int_0^{\epsilon_0} f(\epsilon, r)\left\langle {\frac {\dd \epsilon} {\dd r}} \right\rangle\mathbb{P}_{(\epsilon_0, 0)}(\epsilon_{r}\in \dd \e, \, {r}< \Lambda)\dd {r},
%&=\int_0^A\int_0^\infty f(  \epsilon){u}(\epsilon_0,\epsilon)\mu(\dd \epsilon),
\label{1DEdep}
\end{align}
where we have also used \eqref{BB1}.
As alluded to above, because the variable $r_\ell$ has been replaced by depth ${r}$, which is now both a spatial variable and track length,  we lose the link to the resolvent density.

If we take $f(\epsilon, r) = \mathbf{1}_{(r\leq \ell)}$ in \eqref{1DEdep} for some depth $\ell\in[0,a]$, we see that the total average energy deposited up to depth $\ell$ along the sequential proton track is given by 
\begin{align}
&\mathtt{U}[\mathbf{1}_{(\cdot\leq \ell)}](\epsilon_0, 0) =-\int_0^\ell\int_0^{\epsilon_0}  
\left\langle {\frac {\dd \epsilon} {\dd r}} \right\rangle
\mathbb{P}_{(\epsilon_0, 0)}(\epsilon_{r}\in \dd\e, \, {r}< \Lambda)\dd {r}.
\label{comparebortfeld}
\end{align}
On the other hand, the left-hand side of \eqref{1DEdep} directly yields, for $\ell<a$,
\begin{align}
&\mathtt{U}[\mathbf{1}_{(\cdot\leq \ell)}](\epsilon_0, 0) =-\mathbb{E}_{(\epsilon_0, 0)}\left[\int_0^{ \Lambda} \mathbf{1}_{(r\leq \ell)}\dd \epsilon_{r}\right] =\mathbb{E}_{(\epsilon_0, 0)}[\epsilon_0-\e_{\ell\wedge\Lambda}] = 
\epsilon_0-\mathbb{E}_{(\epsilon_0, 0)}[\e_{\ell}; \ell < \Lambda] .
\label{comparebortfeld1}
\end{align}
\noindent Equating \eqref{comparebortfeld} and \eqref{comparebortfeld1} and differentiating with respect to $\ell$, it follows that the rate of energy deposition per unit length is given by
\begin{align*}
%\frac{\partial}{\partial \ell}\mathtt{U}[\mathbf{1}_{(\cdot\leq \ell)}h](\epsilon_0, 0)
%&=\int_0^\infty\left\{ \varsigma({\ell}, \epsilon) +{\sigma_{{\rm ne}}({\ell}, \e )}\e \int_0^1 u  \pi_{{\rm ne}}({\ell}, \e ; \dd u)\right\}\mathtt{g}({\ell}, \epsilon) \dd \e \notag\\
-\int_0^{\epsilon_0}\left\langle {\frac {\dd \epsilon} {\dd\ell}} \right\rangle\mathbb{P}_{(\epsilon_0, 0)}(\epsilon_{\ell}\in \dd\e, \, \ell< \Lambda)= -\frac{\dd }{\dd \ell} \mathbb{E}_{(\epsilon_0, 0)}[\e_{\ell}; \ell < \Lambda].
\end{align*}

If $\eta$ is the average number of sequential proton tracks that are released at the boundary, then the total average rate of change of energy deposition with respect to depth, i.e. the Bragg curve, is given by 
\begin{equation}
\boxed{
\mathtt{B}(\ell):= -\eta\int_0^{\epsilon_0}\left\langle {\frac {\dd \epsilon} {\dd\ell}} \right\rangle\mathbb{P}_{(\epsilon_0, 0)}(\epsilon_{\ell}\in \dd\e, \, \ell< \Lambda) = -\eta \frac{\dd }{\dd \ell} \mathbb{E}_{(\epsilon_0, 0)}[\e_{\ell}; \ell < \Lambda].
}
\label{BP1d}
\end{equation}

%\subsection{Consistency with the Bragg surface}
If we suppose  that $\mathbb{P}_{(\epsilon_0, 0)}(\epsilon_{r}\in \dd\e, \, {r}< \Lambda) =: \mathtt{g}(\epsilon, {r}) \dd \epsilon$, then additionally note that  we can write
\begin{equation}
\boxed{
\mathtt{B}(\ell):=-\eta\int_0^{\epsilon_0}\left\langle {\frac {\dd \epsilon} {\dd\ell}} \right\rangle\mathtt{g}(\epsilon, \ell) \dd \e.
}
\label{BP1d2}
\end{equation}
If we compare this expression with the Bragg surface in \eqref{BP1d2}, the integral with respect to $\eta$ in \eqref{BS} is replaced by the simple factor $\eta$ in \eqref{BP1d2}. Thereafter a direct comparison with the analytical structure of \eqref{BS} is difficult to make directly because in one dimension, the track-length parameter becomes the same as the one spatial dimension. For example,  whilst $\mathtt{g}$ plays a similar role to $\mathtt{r}$, the operator $\Omega\cdot\nabla_r$ no longer appears.

% directional derivative $\Omega\cdot\nabla_r\mathtt{u}$  in \eqref{BS} corresponds to the transition density $\mathtt{g}$ in \eqref{BP1d2} as track length is now also an oriented spatial dimension. % Finally to get the rate of deposition over total energy deposed, we we need to integrate over the reference measures $\nu$ in both cases.

\subsection{Alignment with the Bortfeld model}
Bortfeld \cite{bortfeld} begins his description of the Bragg curve by claiming that the total 
energy fluence $\Psi(\ell)$ at a depth $\ell$ can be written in the form 
\[
\Psi(\ell) = \Phi(\ell){E}(\ell),
\]
where $\Phi(\ell)$ is particle fluence and $E(\ell)$ is remaining energy at depth $\ell$ (per particle).  
Up to a multiplicative constant (which converts quantities to energy release per unit mass), the total energy release according to Bortfeld is given by 
\begin{equation}
T(\ell) =- \frac{{\dd} }{{\dd} \ell}\Psi(\ell). %= -\left( \Phi(\ell)\frac{{\dd}}{{\dd} \ell}{E}(\ell) + {E}(\ell)\frac{{\dd}}{{\dd} \ell}\Phi(\ell)\right)
\label{energyrelease}
\end{equation}

If we return to identity \eqref{BP1d}, with $\e_0$  as the initial energy per unit sequential proton track then we note that, in the language of Bortfeld, $E(\ell) = \mathbb{E}_{(\epsilon_0, 0)}[\e_\ell | \ell <\Lambda]$. Moreover, with  $\eta$ as the average number of sequential proton tracks that are released at the boundary, it follows that the total particle fluence satisfies $\Phi(\ell)= \eta \mathbb{P}_{(\epsilon_0, 0)}(\ell< \Lambda)$.  Hence,
\begin{align}
\mathtt{B}(\ell) & = - \eta\frac{\dd}{\dd \ell}\mathbb{E}_{(\epsilon_0, 0)}[\e_\ell; \ell<\Lambda]= 
- \frac{\dd}{\dd \ell}\Big(\mathbb{E}_{(\epsilon_0, 0)}[\e_\ell | \ell<\Lambda]\mathbb{P}_{(\epsilon_0, 0)}( \ell<\Lambda)\eta\Big)  = T(\ell).
\label{comparebortfeld2}
\end{align}
 
 As such, the SDE model \eqref{SDE3} (as a dimensionally collapsed version of \eqref{VSDE}) offers a consistent entry point to the reasoning of Bortfeld \cite{bortfeld}.
 Beyond this  starting point, we are unable to make further comparison. The reason for this is that from this departure point,  Bortfeld makes a number of empirically  informed approximations to the energy released, $E$, and flux, $\Phi$, and subsequently derives a parametric family of curves that have sufficient numbers of parameters that allow it to be calibrated to experimental data.

It is nonetheless worth mentioning that Bortfeld's derivation of this parametric family is not based on a coherent underlying physical model, but rather a patchwork of localised observations about individual qualitative features that are generally known for the physical phenomenon of the Bragg curve. Ultimately, Bortfeld's parametric family of curves has proved to be a highly successful tool for medical physicists and clinicians because such curves are {\it a priori} pre-shaped to match the qualitative structure of a typical Bragg curve and are coded by a sufficient number of parameters that can be fine tuned to allow effective fitting to real data.%\footnote{One potential criticism of Bortfeld's curves is that, although they are demonstrated to be robust enough to calibrate effectively to real data, {\it any} rich enough parametric family of curves (e.g. the method of splines) would similarly serve the same purpose. In this respect, the assumptions in \cite{bortfeld} are to some extent relatively arbitrary. For example, Bortfeld \cite{bortfeld} writes {\it `...it assumed that a certain fraction, $\gamma$, of the energy released in the nonelestic nuclear interactions is absorbed locally, and the rest is ignoblack'} and referring to the work of \cite{Berger} Bortfeld proceeds to enforce the choice $\gamma = 0.6$. In this respect, the analysis that Bortfeld \cite{bortfeld} offers has little chance of offering insight into a higher dimensional setting.}

\section{Diffusive approximation}\label{sec:diffusive}
In this section, we discuss the case where \eqref{VSDE} is approximated by a purely diffusive SDE. 
To this end, let us suppose that the driving SDE for  $Y = ((\e_\ell, r_\ell, \Omega_\ell), \ell<\Lambda)$ takes the form 

\begin{equation}
\boxed{
\begin{array}{rl}
\displaystyle \e_\ell &= \e_0-\displaystyle\int_0^\ell \tilde{\varsigma} (Y_l)\dd l\\
&\\
\displaystyle r_\ell& = r_0 +\displaystyle\int_0^\ell\Omega_l\dd l \\
&\\
\displaystyle \Omega_\ell& = \Omega_0-\displaystyle\int_0^\ell m(Y_l)^2\Omega_l\dd l + \int_0^\ell m(Y_l)\Omega_l\wedge\dd B_l,
\end{array}
\label{SDE}
}
\end{equation}
for $\ell< \Lambda =  \inf\{s >0: \e_s =0 \text{ or } r_s\not\in D\}$,
 and where $B= (B_\ell, \ell\geq0)$ is an independent Brownian motion on $\mathbb{R}^3$. In order to ensure that \eqref{SDE} really does approximate \eqref{VSDE}, we choose $\tilde\varsigma$ such that the mean energy deposition rate is the same. That is, we set
\[
\tilde\varsigma(x) =\varsigma(x) +\epsilon \sigma(x)\int_{(0,1)}u\pi(x; \mathbb{S}_2, \dd u), \qquad x \in \mathcal C.
\] 
We also note that unlike \eqref{VSDE}, the SDE \eqref{SDE} does not experience jumps. In particular, equation for $\Omega_\ell$ is that of a state-dependent diffusion on $\mathbb{S}_2$. 
This implies that the energy of a sequential proton track cannot jump to zero. In this case, this can be achieved via an instantaneous (state-dependent) killing rate say $\tilde{q}(x)\in B^+(\mathcal{C})$, which take to be 
\[
  \tilde{q}(x) =\sigma(x) \pi (x; \mathbb{S}_2, \{1\}),\qquad x\in \mathcal{C},
\]
in order to (again) match the behaviour of \eqref{VSDE}.

We note that even if we do not take $\tilde\varsigma$ and $\tilde q$ as above, then the analogous result to Theorem \ref{density} still holds under quite general assumptions. The following result is proved in Appendix B.

\begin{theorem}\label{diffusivethrm}
Suppose that $\tilde\varsigma,  m\in B^+(\mathcal{C})$ are uniformly Lipschitz, uniformly bounded away from zero and are elements of $C^\infty_b(\mathcal C)$. Further assume that $\tilde\varsigma(\epsilon, r, \varphi, \vartheta)^{1/2} + \partial^2_{\tilde\vartheta\vartheta}(\varsigma(\epsilon, r, \varphi, \vartheta)^{1/2})$ is bounded away from $0$ and $\infty$, where we have written $(\varphi, \vartheta)$ in place of $\omega$ using polar coordinates.  Then \eqref{SDE} has a strong solution and its resolvent has a smooth density with respect to ${\rm Leb}_\mathcal{E}\times{\rm Leb}_D\times\Sigma$.
 
\end{theorem}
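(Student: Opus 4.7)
The plan is to first establish strong existence and uniqueness of the solution to \eqref{SDE}, and then to show that the resolvent admits a smooth density by combining H\"ormander's hypoelliptic rank condition with Malliavin calculus estimates for the killed semigroup.

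\textbf{Strong existence.} Since $\tilde\varsigma, m \in C^\infty_b(\mathcal{C})$ are in particular uniformly Lipschitz and the diffusion coefficient $m(x)\,\Omega\wedge(\cdot)$ is Lipschitz in $x$ (as $\omega\mapsto \omega\wedge(\cdot)$ is linear), the classical It\^o theory yields a unique strong solution to \eqref{SDE} up to the stopping time $\Lambda$. A direct It\^o's formula computation on $|\Omega_\ell|^2$ shows that the drift correction $-m(Y_l)^2\,\Omega_l\,\dd l$ is precisely what is needed to cancel the quadratic variation of the martingale term, so $|\Omega_\ell|=1$ is preserved and $\Omega_\ell\in\mathbb{S}_2$ for all $\ell<\Lambda$. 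Because $\tilde\varsigma$ is bounded below by some $\underline{\tilde\varsigma}>0$, one has $\e_\ell\le \e_0-\underline{\tilde\varsigma}\,\ell$ and therefore $\Lambda\le \e_0/\underline{\tilde\varsigma}$ almost surely.

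\textbf{Reduction to a smooth transition density.} The resolvent decomposes as
\[
\mathtt{R}(x,\dd y)=\int_0^{\e_0/\underline{\tilde\varsigma}}\mathbb{P}_x(Y_\ell\in\dd y,\ \ell<\Lambda)\,\dd\ell,
\]
so it suffices to show that the killed semigroup admits a transition density $p_\ell(x,y)$ that is $C^\infty$ in $y$ on the interior of $\mathcal{C}$, with $y$-derivatives locally uniformly integrable in $\ell$. Rewriting \eqref{SDE} in Stratonovich form, the It\^o correction $-m^2\Omega$ cancels exactly (since $\tfrac12\sum_k(\partial_\Omega\sigma_k)\sigma_k=-m^2\Omega$ for $\sigma_k=m(x)\Omega\wedge e_k$), and the underlying vector fields become
\[
V_0=-\tilde\varsigma(x)\,\partial_\e+\Omega\cdot\nabla_r,\qquad V_k(x)=m(x)\,(\Omega\wedge e_k)\cdot\nabla_\Omega,\quad k=1,2,3,
\]
where $\{e_1,e_2,e_3\}$ is the standard basis of $\mathbb{R}^3$.

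\textbf{Verifying H\"ormander's condition.} At any point, $\{\Omega\wedge e_k\}_{k=1}^3$ spans the tangent plane $T_\Omega\mathbb{S}_2$, so $V_1,V_2,V_3$ already generate the two $\Omega$-directions. A direct computation shows that $[V_k,V_0]$ contains the term $m(x)\,(\Omega\wedge e_k)\cdot\nabla_r$, so these brackets produce the two $r$-directions tangent to the sphere at $\Omega$; the remaining $r$-direction along $\Omega$ is contributed by the $\Omega\cdot\nabla_r$ term of $V_0$. Finally, $V_0$ supplies $-\tilde\varsigma\,\partial_\e$ with $\tilde\varsigma$ bounded away from zero, so the $\e$-direction is also covered. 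The H\"ormander rank condition therefore holds at every point of $\mathcal{C}$. Standard Malliavin calculus (Kusuoka--Stroock / Nualart machinery) then produces a smooth transition density on the interior; the precise assumption on $\tilde\varsigma^{1/2}+\partial^2_{\vartheta\vartheta}(\tilde\varsigma^{1/2})$ is exactly what is needed to obtain uniform $L^p$ bounds on the inverse Malliavin covariance matrix once the spherical part is expressed in polar coordinates, so that $y$-derivatives of $p_\ell$ remain integrable against $\dd\ell$ on $[0,\e_0/\underline{\tilde\varsigma}]$.

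\textbf{Main obstacle.} The technical bottleneck is not verifying H\"ormander's condition, but controlling the Malliavin weights in the presence of both the sphere geometry and the killing at $\partial D$. Near the characteristic set $\{(r,\Omega)\in\partial D\times\mathbb{S}_2:\Omega\cdot\mathbf{n}_r=0\}$ the exit time behaves degenerately, and one must show that this degeneracy does not destroy smoothness of the integrated density $\mathtt{r}(x,y)$ on the interior. The angular-coordinate hypothesis on $\tilde\varsigma^{1/2}$ is what provides the uniform quantitative non-degeneracy required to push the Malliavin estimates through near the boundary, after which the integration in $\ell$ and the transfer from density-of-$Y_\ell$ to density-of-resolvent are routine.
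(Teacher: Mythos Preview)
Your H\"ormander verification contains a genuine error, and it traces back to a strategic choice that does not work here. In the parabolic H\"ormander condition for an SDE with Stratonovich drift $V_0$ and diffusion fields $V_1,\dots,V_m$, the drift $V_0$ is \emph{not} admitted on its own to the spanning set; only its iterated brackets with the $V_k$ are. Your claims that ``the remaining $r$-direction along $\Omega$ is contributed by the $\Omega\cdot\nabla_r$ term of $V_0$'' and ``$V_0$ supplies $-\tilde\varsigma\,\partial_\e$'' therefore do not establish the span. Worse, the underlying reduction ``show $Y_\ell$ has a smooth transition density on $\mathcal{C}$ and integrate in $\ell$'' is unavailable in general: if, say, $\tilde\varsigma$ is constant (which the hypotheses permit, since then $\tilde\varsigma^{1/2}+\partial^2_{\vartheta\vartheta}(\tilde\varsigma^{1/2})=\tilde\varsigma^{1/2}>0$), one has $\e_\ell=\e_0-\tilde\varsigma\,\ell$ deterministically, and the fixed-$\ell$ law of $Y_\ell$ is singular with respect to ${\rm Leb}_{\mathcal{E}}$. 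There is no transition density in the $\e$-coordinate to integrate.

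The paper circumvents this by exploiting precisely the monotonicity of $\ell\mapsto\e_\ell$: it reparametrises by dispelled energy $\varepsilon=\e_0-\e$, obtaining a time-\emph{inhomogeneous} SDE for $V_\varepsilon=(R_\varepsilon,\Phi_\varepsilon,\Theta_\varepsilon)$ in polar coordinates. For time-inhomogeneous H\"ormander (as in \cite{Cattiaux}) the extended drift $\partial_\varepsilon+X_0$ \emph{is} a member of the generating set, and the bracket calculation produces
\[
\chi_0=\big(\tilde\varsigma^{1/2}+\partial^2_{\vartheta\vartheta}(\tilde\varsigma^{1/2})\big)\,\partial_\varepsilon
\]
as the sum of the extended drift and a second-order bracket $Z_{22}$. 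This is exactly where the hypothesis on $\tilde\varsigma^{1/2}+\partial^2_{\vartheta\vartheta}(\tilde\varsigma^{1/2})$ enters: it is an \emph{algebraic} non-degeneracy needed to isolate the time direction in the Lie span, not an analytic bound on Malliavin weights near $\partial D$ as you suggest. Once the density $\mathtt{v}$ of $(R_\varepsilon,\mathcal{S}_\varepsilon)$ is obtained, the resolvent density follows directly from the change of variables $\dd\ell=\tilde\varsigma^{-1}\dd\varepsilon$, without ever asserting a fixed-$\ell$ density in $\e$. Restriction to $D$ is then a strong-Markov path-counting identity (subtract the post-$\kappa_D$ contribution), and additional soft killing at rate $\tilde q$ is handled by a bridge representation; neither step involves the boundary Malliavin analysis you identify as the ``main obstacle''.
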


As before we will use $\mathbb{P} = (\mathbb{P}_x, x\in\mathcal{C})$ to denote the law of the solution to \eqref{SDE}. 
In that case the energy deposited still follows the definition in \eqref{Uopp}, albeit now we have,
\begin{align}
\mathtt{U}[f](x) &= -\mathbb{E}_x\left[\int_0^\Lambda f(Y_{\ell-})\dd \epsilon_\ell\right]\notag\\
&= 
\mathbb{E}_x\left[\int_0^{\kappa_D} f(Y_{\ell}){\rm e}^{-\int_0^\ell \tilde{q}(Y_l)\dd l}
\tilde\varsigma(Y_\ell)\dd \ell\right] + 
\mathbb{E}_x\left[\int_0^{\kappa_D} f(Y_{\ell})\epsilon_\ell \tilde{q}(Y_\ell){\rm e}^{-\int_0^\ell \tilde{q}(Y_l)\dd l}
\dd \ell\right]
\notag\\
&
=\mathbb{E}_x\left[\int_0^\Lambda f(Y_{\ell})\left(\tilde\varsigma(Y_\ell) +\epsilon_\ell \tilde{q}(Y_\ell)\right)\dd \ell\right], \qquad x\in  \mathcal{C},  f\in B^+( \mathcal{C}),
\label{Uoppdiff}
\end{align}
where, again, $\kappa_D = \inf\{\ell>0: r_\ell\not\in D\}$.
Following the calculations of Sections  \ref{AC} and \ref{BM}, under the assumptions of Theorem \ref{diffusivethrm}, we can continue to develop similar calculations, albeit  we now identify 
\begin{equation}
-\left\langle\frac{\dd \e}{\dd \ell}\right\rangle = \tilde\varsigma(\epsilon, r,\Omega) + \epsilon \tilde{q}(\epsilon, r,\Omega), \qquad (\epsilon, r,\Omega)\in \mathcal{C}.
\label{BB2}
\end{equation}
In particular the definition of the Bragg manifold and the Bragg surface remain well-defined.

\section{Monte Carlo simulations}\label{simulations}
We now discuss simulations in 3D of the proton beam using the SDE model \eqref{VSDE}. {\it Most importantly we use the existence of the formula \eqref{BS} to generate the Bragg surface. Moreover, we show that the restriction of the Bragg surface to 1D reduces to a Bragg curve with characteristic Bragg peak.}  In simulating \eqref{VSDE} we found that, without quantifying the relevant rates in proportions that are found in nature,  the output of the SDE is not naturally representative of what would normally be seen in a proton beam. That is to say, the balance of parameter choices that occur naturally are somewhat particular to the SDE model in terms of producing the Bragg peak behaviour and it is essential that the SDE is calibrated to the physical system that it models. In future work, we will demonstrate how to calibrate the SDE to natural settings with precision. However, for the purpose of this illustrative example we make informed choices of parameters, based roughly on what is known of the approximate orders of magnitude of the quantities in the SDE that appear in proton beams,  to demonstrate that the SDE responds accordingly delivering a Bragg peak, and indeed a Bragg surface as our theory predicts.

\subsection{Illustrative rate functions}\label{specifics}

We first discuss the various rates involved in \eqref{VSDE}, drawing on insights from physics literature, primarily \cite{NZ} and \cite{Harald}.  For many of the cross sections, we have been unable to find closed form expressions, in which case we take inspiration from empirical data. {\it  We therefore emphasise that the functions we work with in this section are purely illustrative for the purpose of demonstrating the qualitative nature of Monte Carlo solutions to our SDE. The reader will note that one may easily adapt the analytical structure of the cross sections as needed. In this sense our SDE model is robust.}
 In future work we will undertake a comparative study with clinical and phantom data to demonstrate comparatively how the Bragg surface can be calibrated in a consistent way with current uses of the Bragg curves, where numerically derived cross sections can be used.

 We note that whilst our analysis in previous sections are taken with $r\in D\subset \mathbb{R}^3$ and, accordingly, $\Omega\in\mathbb{S}_2$, the theory we have developed works equally well in a planar setting where $D\subset\mathbb{R}^2$ and $\mathbb{S}_2$ is replaced by the unit circle, $\mathbb{S}_1$.

\smallskip

\noindent {\bf Bethe--Bloch formula and inelastic proton-electron collisions.} Recall from our previous discussion that the Bethe--Bloch formula \eqref{realBB} can be written in terms of the components of our SDE, as given by \eqref{BB1}.
According to \cite{NZ}, the Bragg--Kleeman rule (see \cite{BK}), which was originally derived for alpha particles, suggests
$$
-\left\langle { \frac{\dd \epsilon}{\dd\ell}} \right\rangle
= -\frac{\epsilon^{1-p}}{\alpha p},
$$
where $\alpha = \alpha(r)$ is a material-dependent constant and the exponent $p = p(\epsilon)$ is a constant that takes into account the dependence of the proton’s energy or velocity. Values of $\alpha$ and $p$ may be obtained by fitting to either ranges or stopping power data from measurements or theory.

 Again, purely for illustrative purposes, we shall take $\alpha =  0.022$ and  $p = 1.77$; cf. \cite{bortfeld} for supporting rationale. % which corresponds to the mean rate of energy deposition along paths being inversely proportional to energy. 
% {\color{black} By taking $\sigma (x) = {1}/{2\epsilon}$  for $x = (\epsilon, r,\Omega)\in {\mathcal C}$, we see both in \eqref{BB1} for the full SDE model or \eqref{BB2} for the diffusive approximation that
% that 
%\[
%- \left\langle {\frac {\dd \epsilon} {\dd\ell}} \right\rangle
%= O(1/\e).
%\] 
%Once you have decided what you want to do in the next section, you can see the leading order of the BB term.}

\smallskip

\noindent {\bf Large elastic Coulomb scattering cross sections.}
In the study of elastic Coulomb scattering, Rutherford's theory of single scattering provides valuable insights. 
It describes the behaviour of particles as they interact through Coulomb forces, resulting in deviations in their trajectories. Central to this theory is the concept of the differential cross-section, denoted as $\dd \sigma_{\rm e}$, which characterises the likelihood of particles scattering into a specific differential solid angle $ \dd\Omega$ for $\Omega\in \mathbb{S}_2$.

In terms of the notation introduced in the paragraph preceding \eqref{Npi}, the expression for the differential cross section, as derived from Rutherford's theory and elaborated in \cite{Particle}, is given by 
\begin{align*}
\sigma_{\rm e}(x)\pi_{\rm e}(x; \dd \Omega') = \bigg(\frac{Z_eZ'_e}{4\epsilon}\bigg)^2\frac{1}{\sin^4(\varphi/2)}\dd \Omega',%\sin\varphi \,\dd \varphi\,\dd \theta, 
\end{align*}
for $x = (\epsilon, r, \Omega)\in\mathcal{C}$ and where $\Omega'\in\mathbb{S}_2$ is parameterised by $(\theta, \varphi)\in [0,2\pi)\times [0,\pi]$. 
%With an abuse of notation, we can otherwise parameterise $\pi_{\rm e}$ so that  
%\begin{align*}
%\sigma_{\rm e}(x)\pi_{\rm e}(x; \dd \theta,\dd \varphi ) = \bigg(\frac{Z_eZ'_e}{4\epsilon}\bigg)^2\frac{1}{\sin^4(\varphi/2)} \sin\varphi \,\dd \varphi\,\dd \theta, \qquad (\theta, \varphi)\in (0,2\pi]\times (0,\pi].
%\end{align*}
The other parameters are as follows: $Z'_e$ represents the charge of the incident particle, $Z_e$ the charge of the scattering nucleus and $\epsilon$ is the energy of the particle. Note that the angular dependence is dictated by the term $1/\sin^4(\varphi/2)$ which is $O(1/\varphi^4)$ for small values of $\varphi$. This  indicates a preference for forward scattering at very small angles. The lack of azimuthal dependence signifies a uniform dependence on the azimuthal angle of scatter. 
By integrating, we can obtain the total elastic cross section,
\begin{align*}
    \sigma_{\rm e}(x)=\bigg(\frac{Z_eZ'_e}{4\epsilon}\bigg)^2\int_0^{2\pi}\dd\theta\int_0^\pi \dd\varphi\frac{1}{\sin^4(\varphi/2)}\sin\varphi= \infty.
\end{align*}
Informally, this indicates that particles undergo scattering events with vanishingly small scatter angles at an infinite rate---a process which is not amenable to simulation.
In our model, the elastic cross section will only capture the jumps with large scatter angle; all small scatter angle displacements are treated by the  diffusive angular term of the SDE.  Hence, we cut the angular integration off at some $\iota>0$. Then, since
$
    \int_\iota^\pi \dd\varphi \sin^{-4}(\varphi/2)\sin \varphi<\infty,
$
we observe that $\sigma_{\rm e}\propto {1}/{\epsilon^2}$. For simulations we will take ${\color{black}\iota = 0.01}$ and, by introducing a multiplicative constant as a free parameter, specify
\begin{align}\label{eq: elastic rate}
{\color{black}    \sigma_{\rm e}(x)=\frac{1}{100\epsilon^2},\qquad x\in{\mathcal C}.}
\end{align}
This implies that
\begin{equation}\label{eq: elastic dis}
    \pi_{\rm e}(x; \dd \theta,\dd \varphi ) = \frac{1}{C}\frac{\cos(\varphi/2)}{\sin^3(\varphi/2)} \,\dd \varphi\,\dd \theta, \qquad (\theta, \varphi)\in [0,2\pi)\times [\iota,\pi],
\end{equation}

\begin{wrapfigure}{l}{6cm}
  \begin{center}
  \hspace{-10pt}
\includegraphics[width=5.5cm, height = 2.5cm]{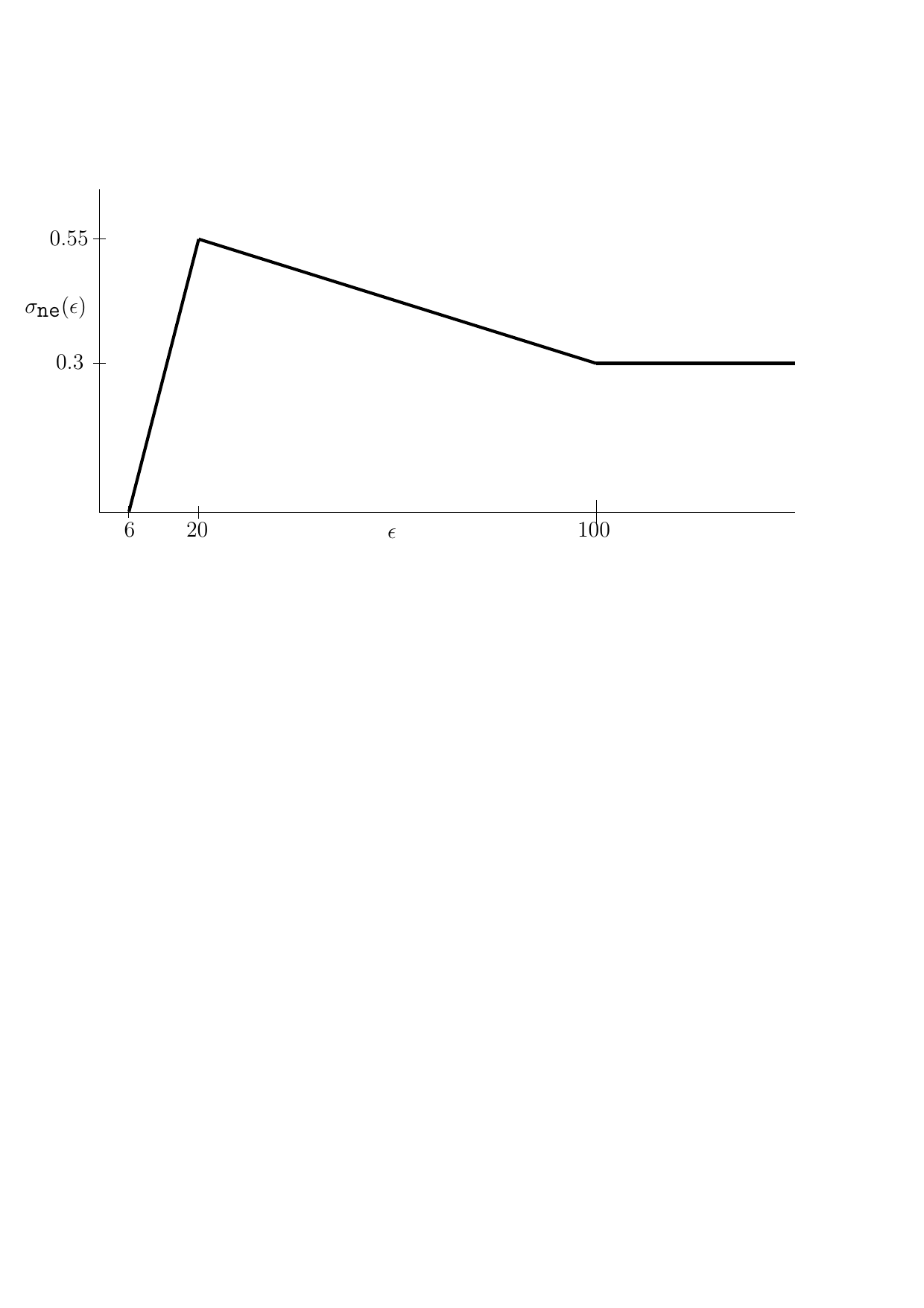}
\caption{The numerical functional we use for $10^3\sigma_{\rm ne}$ used in the simulations.}
    \label{sigma_ne}
      \vspace{-20pt}
\end{center}
\end{wrapfigure} 
\noindent where, with an abuse of notation, we have written $\pi_{\rm e}(x; \dd \theta,\dd \varphi )$ in place of $\pi_{\rm e}(x; \dd \Omega )$, $C = 2\pi[\sin^{-2}(\iota/2)-1]$
and we have used that $\sin\varphi = 2\sin(\varphi/2)\cos(\varphi/2)$.
\smallskip

\noindent {\bf Small elastic Coulomb scattering and inelastic scattering:} Again, for simplicity and illustrative reasons we shall assume that $m(x)= 1/100\epsilon^2$ for $x\in\mathcal{C}$, motivated by the fact that diffusive scattering is modelling elastic interactions with small scatter angles. %{\color{black}Relation to Molière's theory?}

\smallskip

\noindent {\bf Non-elastic scattering.}
%Understanding the intricacies of nonelastic nuclear interacting scattering is far more 
%complicated compablack to elastic and inelastic Coulomb interactions. Despite its complexity and phenomenological nature, nonelastic scattering plays a crucial role in various physical processes. In practical scenarios however, 
%The occurrence of non-elastic scattering is relatively infrequent, with only a fraction (approximately 20\%)  of protons in a beam undergoing nuclear interactions before coming to a halt. 
Certain qualitative patterns are evident in ~\cite[Figure 8]{NZ}. 
Notably, the non-elastic cross section may be taken purely as a function of energy and maintains a constant value at higher energy levels, before exhibiting an increase as energy decreases, followed by a subsequent decline. 
To capture this behaviour, we adopt a crude approximation for the non-elastic scattering cross section based on ~\cite[Figure 8]{NZ}, so that $10^3\sigma_{\rm ne}(\epsilon)$ is  numerically recovered from the graphic in Figure \ref{sigma_ne}.
%\begin{align}\label{eq: nonelastic_rate}
%   {\color{black} \sigma_{\rm ne}(x) = 
%%{\color{black}\frac32  ({\epsilon} - \e^*)^{1/2} {\rm e}^{-(\epsilon - \e^*)^{3/2}}}, \qquad  {\color{black}\e > \e^*},
%}
%\end{align}
%for $A,k,\lambda,c$ positive constants. Figure \ref{nonelastic}, presents a realisation of this nonelastic cross-section against energy when $A=1$, $k=\frac{3}{2}$, $\lambda = 1$ and $c=\frac{1}{2}$.
%where $\e^* = \e^*(r)$ is the minimal energy requiblack to overcome the Coulomb barrier. 

For the outgoing direction after a non-elastic jump, recall that we parametrise $\Omega \in \mathbb S_2$ by the polar coordinates $(\theta, \varphi)\in [0,2\pi)\times [0,\pi]$. Once again, abusing notation and writing $\pi(x; \dd \theta',\dd \varphi', \dd u)$ in place of $\pi(x; \dd \Omega', \dd u)$, for $\theta'\in[0,2\pi)$, $\varphi'\in[0,\pi]$, $u\in(0,1)$, we set
\begin{align}\label{beta}
\pi_{\rm ne}(x; \dd\theta', \dd\varphi', \dd u) = 
  \frac{1}{2\pi^2}\frac{(\frac{\varphi'}{\pi})^{\alpha_\varphi-1}(1-\frac{\varphi'}{\pi})^{\beta_\varphi-1}}{B(\alpha_\varphi,\beta_\varphi)}\frac{(\frac{\theta'}{2\pi})^{\alpha_\theta -1}(1-\frac{\theta'}{2\pi})^{\beta_\theta-1}}{B(\alpha_\theta,\beta_\theta)}\,
\dd \theta'\,
\dd \varphi'\,\dd u
\end{align}
where $B(\alpha,\beta)={\Gamma(\alpha)\Gamma(\beta)}/{\Gamma(\alpha+\beta)}$ and $\Gamma$ denotes the Gamma function. %{\color{black}$\beta_\varphi={\pi}/{\varphi}$ and $\beta_\theta={2\pi}/{\theta}$. We chose $\beta_\varphi$ and $\beta_\theta$ specifically to ensure that the distributions peak at the angles in which the proton is traveling by prior to the elastic scattering event. 
We choose $\alpha_i$ and $\beta_i$ for $i\in\{\varphi,\theta\}$ in the following manner,
\begin{align*}
    (\alpha_\varphi, \beta_\varphi)&=\begin{cases}
        (2,\frac{\pi}{\varphi})\quad &\text{if }0\leq\varphi\leq \frac{\pi}{2}\\
        (\frac{\pi}{\pi-\varphi},2)\quad &\text{if }\frac{\pi}{2}<\varphi\leq \pi,
    \end{cases}\\
    (\alpha_\theta, \beta_\theta)&=\begin{cases}
        (2,\frac{2\pi}{\varphi})\quad &\text{if }0\leq\theta\leq \pi\\
        (\frac{2\pi}{2\pi-\varphi},2)\quad &\text{if }\pi<\theta< 2\pi.
    \end{cases}
\end{align*}

We have assumed that $q(x): =\sigma(x) \pi (x; \mathbb{S}_2, \{1\}) = 0$ but brought to an end all tracks for which the energy drops below 0.01.
\smallskip

\subsection{Simulation of full 3D proton beam}
Finally, we work with the full 3D SDE \eqref{VSDE} to simulate a proton beam consisting of $10^6$ sequential proton tracks.
The C++ code for replicating this simulation is available at \url{https://github.com/JereKoskela/proton-beam-sde}. We take the phase space to be $\mathcal C = [0, 100] \times D \times \mathbb S_2$, where the spatial domain $D \approx [0, 80] \times [-5,5] \times [-5,5]$ is determined dynamically by the simulation code to store every proton track until absorption, and discretised into voxels with side length 0.02.
While the cross sections in the simulations are indicative, for the purpose of comparison the numerical values of energy correspond approximately to megaelectron volts, while the scale of the spatial variables can be thought of as approximate millimetres.

Again representing points on $\mathbb S_2$ in polar coordinates $(\theta, \varphi) \in [0, 2\pi) \times [0, \pi]$, we take the initial measure $\eta$ to be 
\[\eta = \sum_{i = 1}^{10^6} \delta_{(\epsilon_i, r_i, (0, \pi/2))},\]
where each $r_i$ is independently and uniformly sampled from the set $\{(0, y, z) : 
|y + z|^2\leq 1\}$, representing a circular nozzle of radius 1 located at the origin and pointing along the $x$-axis.
Each $\epsilon_i$ is randomised uniformly between 95 and 100.

Schematically, the simulation of a proton trajectory with illustrative rate functions as described above proceeds as follows.
The rates of elastic and non-elastic Coulomb scattering are both bounded above, and are straightforward to simulate via Poisson thinning.
Specifically, given a current proton configuration $(\epsilon_0, r_0, (\theta_0, \varphi_0))$, we simulate the track length until the next candidate scattering event with rate
\begin{equation*}
\sigma^* = \sigma_{\rm e}(\epsilon_a) + \sigma_{\rm ne}^*,
\end{equation*}
where $\sigma_{\rm ne}^* = \max_{\epsilon \in \mathcal{E}}\{ \sigma_{\rm ne}(\epsilon)\}$, $\epsilon_a = 0.01$ is the absorption energy, and we take advantage of the fact that energy decreases along trajectories and $\sigma_{\rm e}$ is a decreasing function so that $\sigma^*$ is an upper bound on the jump rate $\sigma(\epsilon)$.
Let the resulting realisation of track length be denoted by $L$.
Then the intervening track is subdivided into $n_L := \lceil L / \delta \rceil$ intervals of length $L / n_L$, where $\delta > 0$ is a user-specified discretisation parameter which we set to 0.01 by default.
We then build the discretised proton track $\{ ( \epsilon_i, r_i, (\theta_i, \varphi_i) ) \}_{i = 1}^{n_L}$ by successively updating
\begin{align*}
\epsilon_i &= \epsilon_{i - 1} - \frac{\epsilon_{i - 1}^{1-p}}{\alpha p} \frac{ L }{ n_L },\\
\Big(\begin{array}{c}
\theta_i \\
\varphi_i
\end{array}\Big) &= \Big(\begin{array}{c}
\theta_{i-1} \\
\varphi_{i-1}
\end{array}\Big) + \Delta B_{ m(\epsilon_{i - 1}) L / n_L },\\
r_i &= r_{i - 1} + \left(\begin{array}{c}
\sin\theta_i\cos\varphi_i\\  
\sin\theta_i\sin\varphi_i\\  
\cos\theta_i
\end{array}
\right) \frac{ L }{ n_L },
\end{align*}
where $\Delta B_\ell$ is an increment of spherical Brownian motion run for time step $\ell$, which we generate using the exact algorithm of \cite{Mijatovic}, and we have written $m(\epsilon)$ as shorthand for $m(x)$ since the only dependency on the state $x = (\epsilon, r, \Omega)$ is on the energy variable $\epsilon$.
Once the track length reaches the candidate scattering event, it is accepted with probability $\sigma(\epsilon_{ n_L } ) / \sigma^*$.
Rejected scattering events are disregarded, while accepted ones are resolved by determining whether the event is elastic or non-elastic, and updating the energy and direction of the proton using the appropriate distribution: $\pi_{\rm e}$ or $\pi_{\rm ne}$.
This process is repeated until the proton is absorbed.

%\begin{figure}[h!]
%\vspace{-15pt}
%\centering
%\begin{subfigure}{0.5\textwidth}
%  \centering
%  \includegraphics[width=0.7\linewidth]{images/3D_sim_1.png}
%  \label{fig:sub1}
%\end{subfigure}%
%\begin{subfigure}{0.5\textwidth}
%  \centering
%  \includegraphics[width=0.7\linewidth]{images/3D_spread_sim_1.png}
%  \label{fig:sub2}
%\end{subfigure}
%\caption{(R)  3D simulated proton beam. (L) Scaled 3D simulated proton beam.}
%\label{fig:3Dbeam}
%\vspace{-20pt}
%\end{figure}
%
%\begin{figure}[h!]
%\centering
%\includegraphics[scale = 0.6]{images/3D_sim_1.png}
%\caption{Plot of a proton beam consisting of {\color{black}$10^6$} proton paths simulated according to \eqref{VSDE}.}
%\label{fig:3Dbeam}
%\end{figure}

%In Figure \ref{fig:3Dbeam}, we again see the beam structure of the simulated proton paths, this time with more variability around the range. This is due to the large elastic and non-elastic scatter events, which case the particle to have a drastic change in direction and energy.

\smallskip

In order to produce a realisation of the Bragg surface, we only consider the $x$-$y$ plane. We project the function $r\mapsto\mathtt{B}(r)$ to the $x$-$y$ plane by integrating out the $z$-variable. That is to say, if $r = (x,y,z)\in\mathbb{R}^3$, then we plot $(x,y)\mapsto\int\mathtt{B}(x,y, z)\dd z$. We also consider a 2D slide through the 3D surface  $r\mapsto\mathtt{B}(r)$ by simply setting $z=0$. That is to say, we consider $(x,y)\mapsto \mathtt{B}(x,y,0)$. Finally, we also project onto one-dimension to recover the characteristic Bragg curve by considering $x\mapsto\int\int\mathtt{B}(x,y,z)\dd y\dd z$.
The resulting energy deposition profiles are shown in Figure \ref{3Surface}, which took 75 minutes to generate on a 2017 Intel i7-6500U core.
% For the full 3D SDE, in  (left) the height of the histogram corresponds to the average stopping power over a voxelation of $D = (0,120)\times(-200,200)$. Note that in order to visualise these Bragg surfaces, we only consider the $x-y$ plane. We use this figure in Figure \ref{3Surface} (centre) to display this a heat map where the brightness of the colour corresponds to the intensity of the average stopping power in that area. By considering only the x-axis of our simulations and plotting stopping power we can see that Figure \ref{3Surface} (right) gives the classical Bragg peak.
\begin{figure}[h!]
  \centering
    \includegraphics[width=.3\textwidth, height = .3\textwidth]{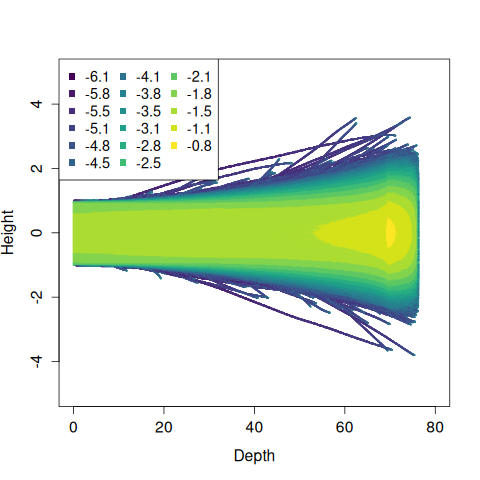}
      \includegraphics[width=.3\textwidth, height = .3\textwidth]{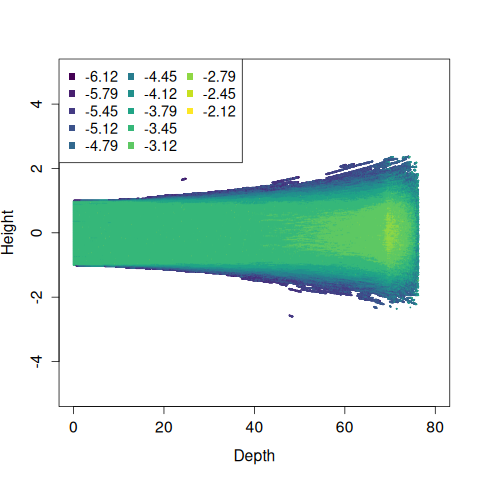}
  \includegraphics[width=.3\textwidth, height = .3\textwidth]{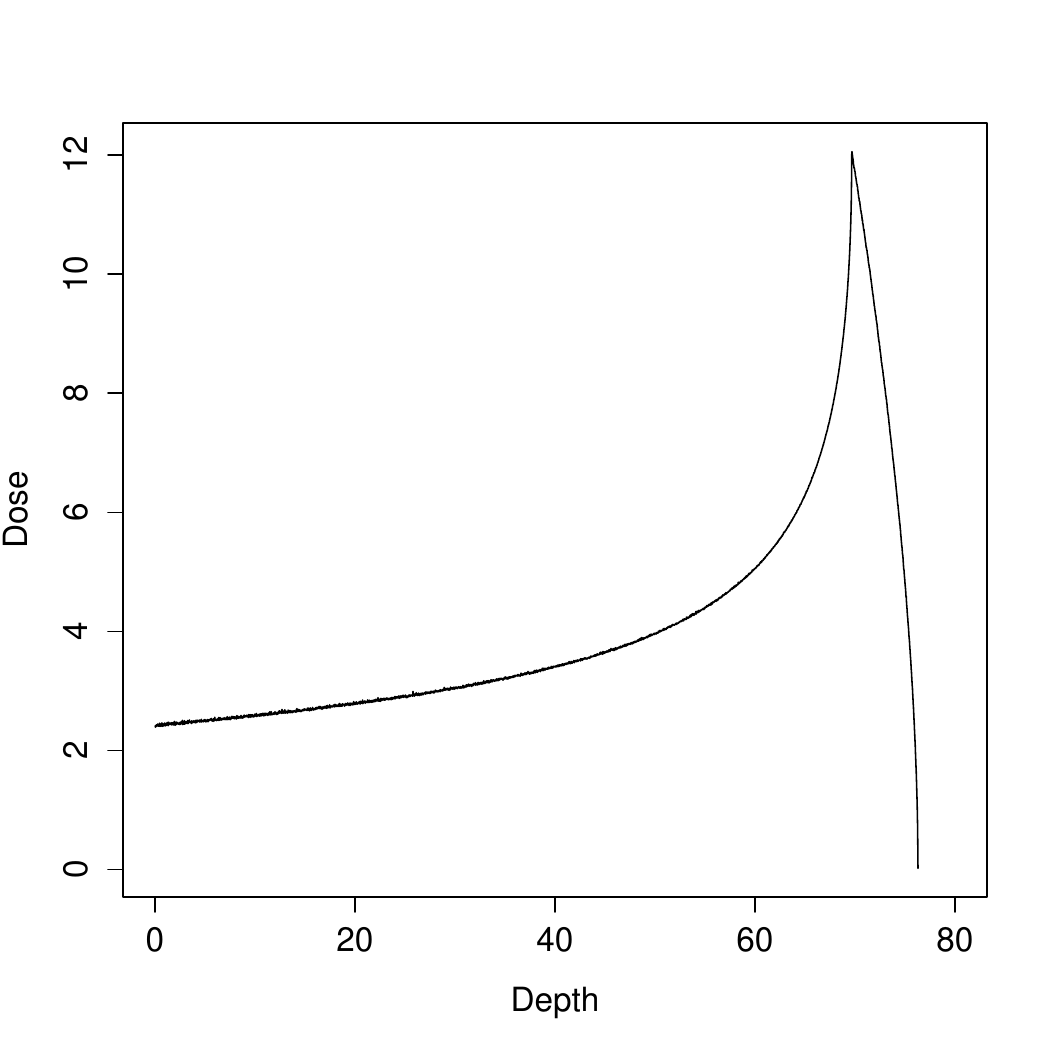}
  \captionof{figure}{(left) Realisation of $10^6$ proton paths in 3D projected in to 2D with the height map, with height on the log scale in base 10, representing the resulting value of the Bragg surface $(x,y)\mapsto\int\mathtt{B}(x,y,z)\dd z$. (centre) a 2D slice through the 3D height map of the stopping power $(x,y)\mapsto\mathtt{B}(x,y, 0)$ for the same simulated protons beam, with again with height on the log scale (base 10), added according to stopping power.  (right) The projection of the Bragg surface  onto the $x$-axis (this time on the linear scale), $x\mapsto \int\int\mathtt{B}(x,y, z)\dd y\dd z$,  giving a classical rate of energy deposition resembling a Bragg peak.}
  \label{3Surface}
\end{figure}

%{\color{black}By observing the 2D diffusive and full 3D Monte Carlo simulations, we can see distinct characteristics in the sharpness of peak profiles. The heightened sharpness observed in the 2D diffusive model compablack to its 3D counterpart stems from two primary factors. %Firstly, in the 3D scenario, the proton possesses an additional spatial dimension, leading to a greater dispersion of final resting positions upon termination. 
%Firstly, the probability of the sequential proton tack concertinaing  prior to reaching its inherent natural  range significantly increases, thereby creating greater variation in the range when projected into 1D. Secondly, the introduction of elastic jumps and non-elastic collisions in three-dimensional space results in the possibility that the proton may undergo drastic directional changes and experience spontaneous energy loss in a way that the diffusive model does not offer. These contribute to a broader 1D peak distribution, thereby diminishing the sharpness exhibited in the 2D diffusive simulations.}

\section{Conclusion}
We have provided a mathematical framework in the form of an SDE solution, which respects the physics of proton transport in proton beam therapy. Our presentation is focused purely on the robustness of the mathematical structure in that it offers a  conceptual definition of the natural higher dimensional analogue of the Bragg curve, which is  meaningful in terms of the underlying physics as well as existing as a well defined mathematical object. In sense, our presentation is as much about how to extend the Bethe--Bloch formula into a fully probabilistic model for the effects of subatomic stopping power.

The SDE approach can be used as a sound mathematical basis for future work on calibration to nuclear data (see \cite{ICRU63}), and hence provides the connection with high-fidelity simulations in e.g.\ Geant4. Indeed, in future work, we will demonstrate how this can be done, resulting in a fast high-fidelity model of the proton beam.

Perhaps most importantly, our approach offers the mathematical basis from which one can numerically solve inverse problems. One such example involves treatment planning. Here, one must back-engineer the initial configuration of the proton beam so as to provide a desired dose delivery profile across a series of voxelated regions overlaying a tumour, given information (e.g.\ from a CT scan) of the patient's anatomy. A second such inverse problem concerns recent work on recording secondary gamma radiation emitted from the patient during treatment, and using it as a conditioning data set to produce uncertainty quantification for dose verification, see e.g.\ \cite{Bayesian}. We will also demonstrate in future work how the SDE approach can be incorporated into these contexts to produce completely new inversion algorithms.

\section*{Appendix A: Spherical Brownian motion in 3 dimensions}\label{SBM3}

Spherical Brownian motion falls under the heading of diffusions on manifolds which, in principle, is a rather heavy-duty topic. That said, we only need to work with the setting of Brownian motion on $\mathbb{S}_2$, which means we may take advantage of various specialisms. Below we consider  two representations.
Before doing so, we also remark that Monte Carlo simulation of spherical Brownian motion is also well understood, see e.g.\ \cite{Mijatovic}.

\subsection*{Spherical coordinate representation}

With $B = (B_\ell, \ell \ge 0)$ as an $\mathbb{S}_2$-valued Brownian motion, we have the option to identify, on the one hand, 
\begin{equation}
B_\ell=
\left(\begin{array}{c}
\sin\vartheta_{\ell}\cos\varphi_{\ell}\\  
\sin\vartheta_{\ell}\sin\varphi_{\ell}\\  
\cos\vartheta_{\ell}
\end{array}
\right), \qquad\ell\geq0,
\label{Bhasdensity}
\end{equation}
where $(\varphi_\ell, \vartheta_\ell)\in [0,\pi]\times [0,2\pi)$ are the spherical coordinate angles, that is, the polar (or co-latitudinal) and azimuthal angles. According to \cite{IM}, the pair $( \varphi, \vartheta)$ necessarily satisfies the coupled SDE
\begin{align*}
&\dd \varphi_\ell = \dd W^{(1)}_\ell + \frac{1}{2}\cot\varphi_\ell \dd \ell, \notag\\
&
\dd \vartheta_\ell = \frac{1}{\sin \varphi_\ell}\dd W^{(2)}_\ell,
\end{align*}
where $W^{(1)},W^{(2)}$ are independent standard Brownian motions  on $\mathbb R$. The first of these two equations for $\varphi$ is nothing more than the Brownian Taboo process on $[0,\pi]$. More precisely, it lives on $(0,\pi)$ and never touches the boundary points $\{0, \pi\}$; see \cite{Knight}. The equation for $\vartheta$ is that of a time-changed Brownian motion. %It is known from \cite{Knight} that the transition measures of $\varphi_\ell$ are absolutely continuous with respect to Lebesgue measure. We also see that, given $L_\ell: = \int_0^\ell (\sin \varphi_l)^{-2}\dd l$,  $\vartheta_\ell$ is Gaussian distributed with zero mean and variance $L_\ell$. 

For completeness, we mention that Brownian motion on $\mathbb{S}_1$ is more easily captured as 
\begin{equation}
B_\ell=
\left(\begin{array}{c}
\cos W_\ell\\  
\sin W_\ell
\end{array}
\right), \qquad\ell\geq0,
\label{Bplanar}
\end{equation}
where $W = (W_\ell, \ell\geq0)$ is a standard Brownian motion on $\mathbb R$.

\subsection*{Cartesian coordinate representation}

An alternative representation for spherical Brownian motion is as a solution to the SDE
\begin{equation}
\dd B_\ell = B_\ell \wedge\dd \widetilde{W}_\ell - B_\ell\dd \ell
=
%\left(\begin{array}{ccc}
%0 & -B^{(3)}_\ell & B^{(2)}_\ell \\
%B^{(3)}_\ell & 0 & -B^{(1)}_\ell\\
%-B^{(2)}_\ell &B^{(1)}_\ell &0
%\end{array}\right)
A(B_\ell)
\dd\widetilde{W}_\ell - B_\ell\dd \ell,
\label{BSDE}
\end{equation}
where $(\widetilde{W}_\ell, \ell\geq0)$ is a standard Brownian motion on $\mathbb{R}^3$ and 
\begin{equation}
A(x) = \left(\begin{array}{ccc}
0 & -x_3 & x_2 \\
x_3 & 0 & -x_1\\
-x_2 &x_1 &0
\end{array}\right), \qquad x= (x_1,x_2,x_3)\in\mathbb{R}^3.
\label{A}
\end{equation}

If we assume that $(m_\ell, \ell\geq0)$ is non-negative, adapted to the natural filtration of $B$, and that it is uniformly bounded away from 0 and $\infty$, then we can define the Markovian time change
\[
\tau_\ell = \inf\Big\{l>0: \int_0^l m^{-2}_u\dd u> \ell\Big\}, \qquad \ell\geq0.
\]
The time changed process $(B_{\tau_\ell}, \ell\geq0)$ is   equivalent to the solution $(B^\tau_\ell, \ell\geq0)$ to the SDE
\begin{align}
\dd B^\tau_\ell &= m_\ell B_\ell \wedge\dd \widetilde{W}_\ell - m^2_\ell B_\ell\dd \ell \notag
\\
&=\left(\begin{array}{ccc}
0 & - m_\ell B^{(3)}_\ell & m_\ell B^{(2)}_\ell \\
m_\ell B^{(3)}_\ell & 0 & -m_\ell B^{(1)}_\ell\\
-m_\ell B^{(2)}_\ell &m_\ell B^{(1)}_\ell &0
\end{array}\right)
\dd\widetilde{W}_\ell - m_\ell^2 B_\ell\dd \ell.
\label{BtauSDE}
\end{align}

%On account of the fact that $B_\ell$ has a density with respect to $\Sigma$ on $\mathbb{S}_2$, and it is not difficult to show that, given $(B_l, 0\leq l\leq \ell)$, the  integral $\int_0^\ell B_l \wedge\dd \widetilde{W}_l$  has a 3-dimensional Gaussian density with vector-zero mean and co-variance matrix
%\[
%s(B_l, 0\leq l\leq \ell):=
%\int_0^\ell \left(\begin{array}{ccc}
%(B^{(3)}_l)^2 +( B^{(2)}_l)^2 & -B^{(1)}_lB^{(2)}_l & -B^{(1)}_lB^{(3)}_l \\
% -B^{(1)}_lB^{(2)}_l  & (B^{(1)}_l)^2 +( B^{(3)}_l)^2 & -B^{(2)}_lB^{(3)}_l\\
%-B^{(1)}_lB^{(3)}_l& -B^{(2)}_lB^{(3)}_l &(B^{(2)}_l)^2 +( B^{(1)}_l)^2
%\end{array}\right)\dd l.
%\]
%That is, for bounded measurable $f, g$
%\[
%\mathbb{E}\left[\left.g(B_\ell)f\left(\int_0^\ell B_l\wedge \dd \widetilde{W}_l\right)\right|
% (B_l, 0\leq l\leq\ell)\right]
%=g(B_\ell)\int_{\mathbb{R}^3} h(x)  \frac{1}{\sqrt{(2\pi)^3|s|}}
%\exp(-x^Ts^{-1}x/2)\dd x
%\]
%In a similar fashion to \eqref{density1} and \eqref{density2}, we can use the same reasoning,  Theorem 1 of \cite{CU} (see also Section 2.2. of \cite{Lupu}), to deduce that  $(B_\ell, \int_0^\ell B_l\wedge \dd \widetilde{W}_l)$ has a joint density with respect to ${\rm Leb}_{\mathbb{R}^3}\times \Sigma$. Back in \eqref{BSDE}, this allows us the conclusion that $(r_\ell, \Omega_\ell)$ has a joint density with respect to ${\rm Leb}_{\mathbb{R}^3}\times \Sigma$.

\section*{Appendix B: Proof of Theorem \ref{density} and \ref{diffusivethrm}}
It turns out that it is more convenient to undertake the proof of Theorem \ref{diffusivethrm} first, as it feeds the proof of Theorem \ref{density}.

\begin{proof}[Proof of Theorem \ref{diffusivethrm}]
For ease, we break this proof into several steps. In step 1, we will show that the SDE \eqref{SDE} admits a unique strong solution. In step 2 we introduce an auxiliary SDE via a time change of \eqref{SDE} that will be useful in step 3, where we show that the solution to the auxiliary SDE admits a smooth density which is absolutely continuous with respect to ${\rm Leb}_\mathcal{E}\times{\rm Leb}_D\times \Sigma$. In addition, we show that this implies that the same holds for \eqref{SDE}. In step 4, we show that the existence of such a density still holds when we restrict sequential proton tracks to $\ell < \Lambda$ by introducing absorption at the boundary of $D$. Finally, in step 5, we show that this result can also be extended to the case when we introduce absorption at a bounded rate inside $D$. 

\smallskip

\noindent{\it Step 1.} Consider the SDE  in integral form, indexed by track length, $Z_\ell = (\epsilon_\ell,  r_\ell, \Omega_\ell)\in \mathcal{E}\times\mathbb{R}^3 \times \mathbb{S}_2$ given by
\begin{align}
Z_\ell& = 
z_0
+
\left(
\begin{array}{l}
-\int_0^\ell\varsigma (Z_{{l}})\dd l \\ \\
\int_0^\ell\Omega_{l}\dd l \\  \\
-\int_0^\ell m(Z_l)^2\Omega_l\dd l + \int_0^\ell m(Z_{l})\Omega_l\wedge\dd B_{l} 
\end{array}\right),
\label{VSDE2}
\end{align}
where $z_0 = (\epsilon_0, r_0,\Omega_0)$ and $\varsigma, m$ are subject to the same conditions as in Theorem \ref{diffusivethrm}.
The SDE \eqref{VSDE2} simply models what is known in the literature as a state-dependent {\it kinetic diffusion} with the addition of a depleting  energy term. Under the assumptions of Theorem \ref{diffusivethrm}, a unique strong solution to this SDE is a well known result, see for example Section 5.2 of \cite{Oksendal}.

Henceforth we will write $\mathbf{P} = (\mathbf{P}_{\epsilon, r,\Omega}, (\epsilon, r, \Omega)\in \mathcal{E}\times\mathbb{R}^3\times \mathbb{S}_2)$ for its probabilities and recall e.g. from Chapter 7 of \cite{Oksendal} that $(Z, \mathbf{P})$ is a strong Markov process with the Feller property (see Corollary 19.27 of \cite{Schilling}). 

\smallskip

\noindent {\it Step 2.} We now introduce an auxiliary SDE obtained as a time change of \eqref{VSDE2}.
Taking account of the discussion in  Appendix A above, in particular \eqref{BtauSDE},  the system \eqref{VSDE2} can equivalently be studied by considering the integral form of the SDE for $W_\ell = (\epsilon_\ell, r_\ell, \varphi_\ell, \vartheta_\ell)\in \mathcal{E}\times \mathbb{R}^3\times  [0,\pi]\times [0,2\pi)$
\begin{align}
W_\ell& = 
w_0
+
\left(
\begin{array}{l}
-\int_0^\ell\varsigma (W_{{l}})\dd l \\ \\
\int_0^\ell \Omega_{l}\dd l\\   \\
\int_0^\ell m(W_l)^2(\cot\varphi_l)\dd l/2 + \int_0^\ell m(W_l)\dd \hat{W}^{(1)}_{l} 
\\ \\
\int_0^\ell {m(W_l)}({\sin \varphi_l})^{-1} \dd \hat{W}^{(2)}_l \end{array}\right),
\label{VSDE3}
\end{align}
where $\hat{W}^{(1)}, \hat{W}^{(2)}$ are independent standard Brownian motions and $w_0 = (\epsilon_0, r_0, \varphi_0, \vartheta_0)$.

Noting that $(\epsilon_\ell, \ell\geq0)$ is strictly decreasing, 
we can reduce this system even further by changing from indexing by track length to indexing by dispelled energy $E = \epsilon_0 - \epsilon$. Indeed, we note that 
\begin{equation}
E_\ell = \epsilon_0 - \epsilon_\ell = \int_0^\ell \varsigma(\epsilon_0- E_l, r_l, \Omega_l)\dd l.
\label{energychange}
\end{equation}
Now define for each $0\leq \varepsilon\leq \epsilon_0$, 
\[
E^{-1}_\varepsilon = \inf\{\ell>0: E_\ell >\varepsilon\}.
\]
Basic calculus tells us that 
\begin{equation}
\frac{\dd E^{-1}_\varepsilon}{\dd \varepsilon}  =\left(\left.\frac{\dd E_\ell}{\dd \ell}\right|_{\ell = E^{-1}_\varepsilon}\right)^{-1} = \frac{1}{\varsigma (\varepsilon, r_{E^{-1}_\varepsilon}, \Omega_{E^{-1}_\varepsilon})}.
\label{E-1differential}
\end{equation}

Next, define $R_\varepsilon = r_{E^{-1}_\varepsilon}$ and ${\mathcal S}_{\varepsilon} = \Omega_{E^{-1}_\varepsilon}$, $\Phi_\varepsilon = \varphi_{E^{-1}_\varepsilon}$ and $\Theta_\varepsilon = \vartheta_{E^{-1}_\varepsilon}$ for $0\leq \varepsilon\leq \epsilon_0$. Taking account of \eqref{E-1differential}, the  re-parameterisation by energy allows us to reduce \eqref{VSDE3} to the system $V_\varepsilon = (R_\varepsilon, \Phi_\varepsilon, \Theta_\varepsilon)$, $0\leq \varepsilon\leq \epsilon_0$, where 
\begin{align}
V_\varepsilon = 
w_0
&+
\left(
\begin{array}{l}
\int_0^{\varepsilon}\varsigma(e, R_e,{\mathcal S}_e)^{-1/2}{\mathcal S}_e\dd e\\   \\
\int_0^{\varepsilon}{(2\varsigma(e, R_e,{\mathcal S}_e))}^{-1}m(V_e)^2\cot\Phi_e\dd e 
+\int_0^{\varepsilon} \varsigma(e, R_e,{\mathcal S}_e)^{-1/2}m(V_e)\dd \tilde{W}^{(1)}_e
\\ \\
\int_0^{\varepsilon}\varsigma(e, R_e,{\mathcal S}_e)^{-1/2}{m(V_e)}({\sin \Phi_e})^{-1} \dd \tilde{W}^{(2)}_e \end{array}\right)
\label{VSDE4}
\end{align}
where $\tilde{W}^{(1)}, \tilde{W}^{(2)}$ are independent standard Brownian motions. For convenience, we will denote the law of $({\color{blue}{\mathcal S}}, \Phi, \Theta)$ associated to that of $(Z, \mathbf{P})$ by $\hat{\mathbf{P}} = (\hat{\mathbf{P}}_w:w = (r, \varphi,\vartheta)\in \mathbb{R}^3\times[0,\pi]\times[0,2\pi) )$.

 We note that for a bounded continuous $g:\mathbb{R}^3\times[0,\pi]\times[0,2\pi)$ and $w_0 = (r_0, \varphi_0,\vartheta_0)\in \mathbb{R}^3\times[0,\pi]\times[0,2\pi)$, with $(\Omega_\ell^{(1)}, \Omega_\ell^{(2)}, \Omega_\ell^{(3)})$ denoting the components of $\Omega_\ell$ in $\mathbb R^3$, we have
  \begin{align}
 \hat{\mathbf{E}}_{w_0}\left[g(R_\varepsilon, \Phi_\varepsilon, \Theta_\varepsilon)\right] &= 
  \hat{\mathbf{E}}_{w_0}\left[g(r_{E^{-1}_\varepsilon},  \varphi_{E^{-1}_\varepsilon} , \vartheta_{E^{-1}_\varepsilon})\right] \notag\\
  &=
    \hat{\mathbf{E}}_{w_0}\left[g(r_{E^{-1}_\varepsilon},   \arctan(\Omega^{(2)}_{E^{-1}_\varepsilon}/ \Omega^{(1)}_{E^{-1}_\varepsilon}), \arccos(\Omega^{(3)}_{E^{-1}_\varepsilon}))\right].
 \end{align}
 
On account of the fact that $\varsigma\in B^+(\mathcal{C})$ is uniformly bounded away from 0 and $\infty$, the variable change \eqref{energychange} is continuous in $\varepsilon$ and hence it is easy to see that 
$
\lim_{\varepsilon\to0} \hat{\mathbf{E}}_{w_0}\left[g(R_\varepsilon, \Phi_\varepsilon, \Theta_\varepsilon)\right] = g(w_0).
$
Moreover,  provided $\varphi_0\not\in\{0,\pi\}$, the Feller property of $(Z,\mathbf{P})$ implies that 
$
\lim_{w_0'\to w_0} \hat{\mathbf{E}}_{w'_0}\left[g(R_\varepsilon, \Phi_\varepsilon, \Theta_\varepsilon)\right] =\hat{\mathbf{E}}_{w_0}\left[g(R_\varepsilon, \Phi_\varepsilon, \Theta_\varepsilon)\right]. 
$ 
To accommodate for the setting that $\varphi_0\in\{0,\pi\}$, we note that the parameterisation  of the spherical (or equivalently of the Cartesian) coordinates that we are using is arbitrary and hence a different parameterisation would avoid the correspondence that  $\varphi_0\in\{0,\pi\}$. In that case, the above argument gives the continuity of $w_0\mapsto \hat{\mathbf{E}}_{w_0}\left[g(R_\varepsilon, \Phi_\varepsilon, \Theta_\varepsilon)\right]$.
In conclusion, the Feller property of $(Z,\mathbf{P})$ implies the Feller property of $(V, \hat{\mathbf{P}})$.
 
\smallskip

\noindent {\it Step 3.} We now show that \eqref{VSDE2} has a potential density which is absolutely continuous with respect to ${\rm Leb}_\mathcal{E}\times{\rm Leb}_{\mathbb R^3}\times \Sigma$, where we recall, for a set $S$, ${\rm Leb}_S$ is  Lebesgue measure on $S$, and $\Sigma$ is the surface measure on $\mathbb{S}_2$. In order to do so, we appeal to Malliavin calculus to first show that $V_\varepsilon$ has a density with respect to ${\rm Leb}_{\mathbb{R}^3}\times {\rm Leb}_{[0,\pi]}\times {\rm Leb}_{[0,2\pi)}$ for each $0\leq \varepsilon\leq \epsilon_0$.  By simple a transformation of variables, this will then give us that the  pair $(R_\varepsilon, {\mathcal S}_\varepsilon)$ has a density with respect to ${\rm Leb}_{\mathbb{R}^3}\times \Sigma$ for each $0\leq \varepsilon\leq \epsilon_0$. 

Our approach will be to apply H\"ormander's theorem for time-inhomogenous SDEs, see \cite{Hormander, Cattiaux}. To this end, we write \eqref{VSDE4} in polar coordinates: 
\begin{align}
\dd R_\varepsilon &= \varsigma(\varepsilon, V_\varepsilon)^{-1/2}
\begin{pmatrix} 
\sin\Theta_\varepsilon \cos\Phi_\varepsilon\\
\sin\Theta_\varepsilon \sin\Phi_\varepsilon\\
\cos\Theta_\varepsilon
\end{pmatrix}
\dd \varepsilon,\label{dR}\\
\dd \Phi_\varepsilon &= (2\varsigma(\varepsilon, V_\varepsilon))^{-1}m(V_\varepsilon)^2 \cot\Phi_\varepsilon\dd \varepsilon + \varsigma(\varepsilon, V_\varepsilon)^{-1/2}m(V_e)\dd \tilde{W}^{(1)}_\varepsilon, \label{dPhi}\\
\dd \Theta_\varepsilon &= \varsigma(\varepsilon, V_\varepsilon)^{-1/2}m(V_\varepsilon)(\sin\Phi_\varepsilon)^{-1}\dd\tilde{W}^{(2)}_\varepsilon. \label{dTheta}
\end{align}
In order to apply H\"ormander's theorem, we next write the SDE in the following form
\begin{align*}
  {\rm d}Y_t &= \sum_{j = 1}^m X_j(t, Y_t) {\rm d}W_t^j + X_0(t, Y_t) {\rm d}t, \qquad t \ge 0,
\end{align*}
where the measurable flows $t \mapsto X_j(t, \cdot)$ are assumed to be $C^\infty$ vector fields, such that for any multi-index $\alpha$, $\partial_x^\alpha X_j(t, \cdot)$ are bounded on any compact time-space subset. Comparing this form with \eqref{dR}--\eqref{dTheta}, we have 
\begin{align}
X_0 &= \varsigma(\epsilon, r, \varphi, \vartheta)^{-1/2}\sin\vartheta\cos\varphi \,\partial_{r_1} + \varsigma(\epsilon, r, \varphi, \vartheta)^{-1/2}\sin\vartheta\sin\varphi \,\partial_{r_2} \notag\\
&\hspace{2cm}+ \varsigma(\epsilon, r, \varphi, \vartheta)^{-1/2}\cos\vartheta \,\partial_{r_3} + \frac12\varsigma(\epsilon, r, \varphi, \vartheta)^{-1}m(r, \varphi, \vartheta)^2\cot\varphi \,\partial_\vartheta, \label{X0}\\
X_1 &= \varsigma(\epsilon, r, \varphi, \vartheta)^{-1/2}m(r, \varphi, \vartheta)\, \partial_\varphi, \label{X1}\\
X_2 &= \varsigma(\epsilon, r, \varphi, \vartheta)^{-1/2}m(r, \varphi, \vartheta)(\sin\varphi)^{-1}\, \partial_\vartheta, \label{X2}
\end{align}
where we have slightly abused notation and written $\varsigma(\epsilon, r, \varphi, \vartheta)$ in place of $\varsigma(\epsilon, r, s)$. Note that the assumptions of Theorem \ref{density} ensure that $X_0$, $X_1$ and $X_2$ satisfy the required smoothness and boundedness conditions. With this notation in hand, according to \cite{Cattiaux}, $V_\epsilon$ has a smooth density if 
\[
  \text{dim span Lie}\{\partial_\epsilon + X_0, X_1, X_2\} = 6,
\]
that is, the vector fields $\partial_\epsilon + X_0, X_1, X_2$ and their Lie brackets of all orders span the tangent space at every point $(\epsilon, r, \varphi, \vartheta)$ in the six-dimensional space $[0, \epsilon_0] \times \mathbb R^3 \times [0, \pi] \times [0, 2\pi)$. The Lie bracket $[X, Y]$ of two vector fields $X$ and $Y$ is defined to be $[X, Y] = XY -  YX$.

Before showing that this condition holds, we make two observations to simplify our calculations. First, define
\begin{align}
Y_0 &= \sin\vartheta\cos\varphi \,\partial_{r_1} + \sin\vartheta\sin\varphi \,\partial_{r_2}+ \cos\vartheta \,\partial_{r_3} + \frac12\varsigma(\epsilon, r, \varphi, \vartheta)^{-1/2}m(r, \varphi, \vartheta)^2\cot\varphi \,\partial_\vartheta, \label{Y0}\\
Y_1 &= \partial_\varphi, \label{Y1}\\
Y_2 &= \partial_\vartheta. \label{Y2}
\end{align}
Then, it is easy to check that the brackets generated by $\varsigma(\epsilon, r, \varphi, \vartheta)^{1/2}\partial_\epsilon + Y_0$, $Y_1$ and $Y_2$ are the same as those generated by $\partial_\epsilon + X_0$, $X_1$ and $X_2$ and hence, it is equivalent to show that 
\[
  \text{dim span Lie}\{\varsigma(\epsilon, r, \varphi, \vartheta)^{1/2}\partial_\epsilon + Y_0, Y_1, Y_2\} = 6.
\]
Next, we note that $Y_1$ and $Y_2$ already give us access to the $\partial_\varphi$ and $\partial_\vartheta$ directions, respectively. Thus, it is sufficient to show that
\[
  \text{dim span Lie}\{\varsigma(\epsilon, r, \varphi, \vartheta)^{1/2}\partial_\epsilon + Z_0, Y_1, Y_2\} = 6,
\]
where
\begin{align}
Z_0 = \sin\vartheta\cos\varphi \,\partial_{r_1} + \sin\vartheta\sin\varphi \,\partial_{r_2}+ \cos\vartheta \,\partial_{r_3}. \label{Z0}
\end{align}

To this end, we have
\begin{align}
  Z_1 &=[Y_1, \varsigma(\epsilon, r, \varphi, \vartheta)^{1/2}\partial_\epsilon + Z_0] \notag \\
  &= \partial_\varphi(\varsigma(\epsilon, r, \varphi, \vartheta)^{1/2})\partial_\epsilon - \sin\vartheta\sin\varphi \,\partial_{r_1} + \sin\vartheta\cos\varphi\,\partial_{r_2}, \label{Z1}\\
  %%
%  Z_2 &= [Y_2, \varsigma(\epsilon, r, \varphi, \vartheta)^{1/2}\partial_\epsilon + Z_0]\notag \\
%  &= \partial_\vartheta(\varsigma(\epsilon, r, \varphi, \vartheta)^{1/2})\partial_\epsilon + \cos\vartheta\cos\varphi\, \partial_{r_1} + \cos\vartheta\sin\varphi \, \partial_{r_2} -\sin\vartheta\, \partial_{r_3}  \label{Z2}\\
  %%
  Z_{11} &= [Y_1, [Y_1, \varsigma(\epsilon, r, \varphi, \vartheta)^{1/2}\partial_\epsilon + Z_0] ]\notag \\
  &= \partial^2_{\varphi \varphi}(\varsigma(\epsilon, r, \varphi, \vartheta)^{1/2})\partial_\epsilon - \sin\vartheta\cos\varphi \,\partial_{r_1} - \sin\vartheta\sin\varphi\,\partial_{r_2}, \label{Z11}\\
   %%
%   Z_{12} &= [Y_2,  [Y_1, \varsigma(\epsilon, r, \varphi, \vartheta)^{1/2}\partial_\epsilon + Z_0] ] \notag \\
%   &= \partial^2_{\vartheta\varphi}(\varsigma(\epsilon, r, \varphi, \vartheta)^{1/2})\partial_\epsilon 
%   - \cos\vartheta\sin\varphi \,\partial_{r_1} + \cos\vartheta\cos\varphi\,\partial_{r_2}
%    \label{Z12}\\
  %%
  Z_{22} &= [Y_2, [Y_2, \varsigma(\epsilon, r, \varphi, \vartheta)^{1/2}\partial_\epsilon + Z_0]] \notag \\
  &= \partial^2_{\vartheta\vartheta}(\varsigma(\epsilon, r, \varphi, \vartheta)^{1/2})\partial_\epsilon - \sin\vartheta\cos\varphi\, \partial_{r_1} - \sin\vartheta\sin\varphi \, \partial_{r_2}-\cos\vartheta\, \partial_{r_3}.  \label{Z22}
\end{align}

First note that
\[
   \chi_0 := \varsigma(\epsilon, r, \varphi, \vartheta)^{1/2}\partial_\epsilon + Z_0 + Z_{22} = (\varsigma(\epsilon, r, \varphi, \vartheta)^{1/2} + \partial^2_{\vartheta\vartheta}(\varsigma(\epsilon, r, \varphi, \vartheta)^{1/2}))\partial_\epsilon,
\]
which gives us access to the $\partial_\epsilon$ direction. Thus we can use this to ``remove'' the vectors in the $\partial_\epsilon$ direction from the $Z_i$  since $\varsigma(\epsilon, r, \varphi, \vartheta)^{1/2} + \partial^2_{\vartheta\vartheta}(\varsigma(\epsilon, r, \varphi, \vartheta)^{1/2})$ is bounded away from $0$ and $\infty$. To avoid introducing further notation, we still label \eqref{Z0}--\eqref{Z22} the same once the corresponding $\partial_\epsilon$s are removed. Then, we have
\[
  \chi_1 := \sin\varphi Z_1 + \cos\varphi Z_{11} = -\sin\vartheta \partial_{r_1}
\]
and
\[
  \chi_2 := \cos\varphi Z_1 - \sin\varphi Z_{11} = \sin\vartheta \partial_{r_2}.
\]
Then finally, 
\begin{align*}
 \varsigma(\epsilon, r, \varphi, \vartheta)^{1/2}\partial_\epsilon + Z_0 
 - \frac{ \varsigma(\epsilon, r, \varphi, \vartheta)^{1/2}}{\varsigma(\epsilon, r, \varphi, \vartheta)^{1/2} + \partial^2_{\vartheta\vartheta}(\varsigma(\epsilon, r, \varphi, \vartheta)^{1/2})}\chi_0+ \cos\varphi\chi_1 - \sin\varphi\chi_2 = \cos\vartheta \partial_{r_3}.
\end{align*}
Hence, we now have access to all directions and thus it follows that $(R_\epsilon, \Phi_\epsilon, \Theta_\epsilon)$ has a density with respect to ${\rm Leb_{\mathbb R^3}} \times {\rm Leb_{[0, \pi]}} \times {\rm Leb_{[0, 2\pi)}}$. A change of variables argument then yields the same result for $(R_\epsilon, \mathcal S_\epsilon)$ with respect to ${\rm Leb_{\mathbb R^3}} \times \Sigma$.
 
\smallskip

With the associated density of $V_\varepsilon$, $0\leq \varepsilon\leq \epsilon_0$,  in hand we note that, with $\zeta = \inf\{\ell>0: \epsilon_\ell = 0\}$, for bounded measurable $f, g, h$ on $[0,\epsilon_0]$, $\mathbb{R}^3$, $\mathbb{S}_2$ respectively, we have 
\begin{align}
&\mathbf{E}_{(\epsilon_0, r_0,\Omega_0)}\left[\int_0^{\zeta}f(\epsilon_\ell)g(r_\ell)h(\Omega_\ell)\dd \ell\right] \notag\\
&= 
\mathbf{E}_{(\epsilon_0, r_0,\Omega_0)}\left[\int_0^{\epsilon_0}
\frac{f(\varepsilon) g(R_\varepsilon)h({\mathcal S}_\varepsilon)}{\varsigma(\varepsilon, R_\varepsilon,{\mathcal S}_\varepsilon)}\dd \varepsilon\right]\notag\\
&=\int_0^{\epsilon_0}f(\varepsilon)
\mathbf{E}_{(\epsilon_0, r_0,\Omega_0)}\left[\frac{
g(R_\varepsilon)h({\mathcal S}_\varepsilon)}{\varsigma(\varepsilon, R_\varepsilon,{\mathcal S}_\varepsilon)}
\right]\dd\varepsilon\notag\\
&= \int_0^{\epsilon_0}\int_{\mathbb{R}^3}\int_{\mathbb{S}_2}f(\varepsilon)
g(R )h({\color{blue}{\mathcal S}} )
\frac
{\mathtt{v}(\epsilon_0, r_0,\Omega_0; \varepsilon, R, {\color{blue}{\mathcal S}})}{\varsigma(\varepsilon, R ,{\color{blue}{\mathcal S}} )}
\,\dd\varepsilon\,\dd R\,\dd {\color{blue}{\mathcal S}}
\label{mathttv}
\end{align}
where we have invoked the change of variables given by \eqref{E-1differential}  and  $\mathtt{v}$ is the density of  the pair $(R,{\color{blue}{\mathcal S}})$ under $\mathbf{P}_{(\epsilon_0, r_0,\Omega_0)}$. In conclusion, we see that the resolvent of $((\epsilon_\ell, r_\ell, \Omega_\ell), 0\leq \ell\leq \Lambda)$ is absolutely continuous with respect to ${\rm Leb}_{\mathcal{E}}\times{\rm Leb}_{\mathbb{R}^3}\times \Sigma$. This concludes Step 2.

%To this end, we start by recalling from Chapter IV.2 of  \cite{BH} that if we consider  \eqref{VSDE2} driven by the pair $(W^{(1)},W^{(2)})$ as in \eqref{polarrep1} and \eqref{polarrep2}, then the solution to \eqref{VSDE2} has law which is absolutely continuous with respect to ${\rm Leb}_{\mathcal{E}}\times {\rm Leb}_{\mathbb{R}^3}\times {\rm Leb}_{\mathbb{R}^2}$, where we have used obvious notation.  Specifically, the reader is directed to Theorem IV.2.2.1, Theorem VI.2.3.1 and Exercise IV.2.1. Noting, however, that $\Omega_\ell = B_{\tau_\ell}$, $\ell\geq0$ is represented via the change of variable \eqref{Bhasdensity}, albeit with $(\varphi, \vartheta)$ replaced by $(\varphi^\tau,\vartheta^\tau)$, we can easily convert the density with respect to ${\rm Leb}_{\mathcal{E}}\times {\rm Leb}_{\mathbb{R}^3}\times {\rm Leb}_{\mathbb{R}^2}$ to a density with respect to ${\rm Leb}_{\mathcal{E}}\times {\rm Leb}_{\mathbb{R}^3}\times\Sigma$.

 \smallskip

\noindent {\it Step 4.} We now  want to restrict the law of triplet $(\epsilon, r,\Omega)$ to that of the law with the physical position $(r_\ell, \ell\geq0)$ killed on first exiting the physical domain $D$. To this end, let  $\kappa_D = \inf\{\ell>0: r_\ell\not\in D\}$. Then by the principle of path counting, we also have for bounded measurable $f,g,h$,
recalling $z_0 = (\epsilon_0, r_0,\Omega_0)$,
\begin{align*}
&\mathbf{E}_{z_0}\left[\int_0^{\zeta\wedge \kappa_D}f(\epsilon_\ell)g(r_\ell)h(\Omega_\ell)\dd \ell\right]\notag\\
&=\int_0^{\epsilon_0}\int_{\mathbb{R}^3}\int_{\mathbb{S}_2}
f(\epsilon)g(r)h(\Omega)\frac{\mathtt{v}(z_0; \epsilon, r, \Omega)}{\varsigma(\epsilon, r, \Omega)} \,\dd \epsilon\,\dd r\,\dd \Omega\notag\\
&\hspace{1cm}-
\int_{0}^\infty \dd u
\int_{\mathcal{E}\times\partial D\times \mathbb{S}_2}\mathbf{P}_{z_0} (\kappa_D\in \dd u,\, \epsilon_{\kappa_D}\in \dd \epsilon',\, r_{\kappa_D}\in \dd r', \, \Omega_{\kappa_D}\in \dd \Omega', \, u< \zeta)\notag\\
&\hspace{4cm}
%\mathbf{E}_{(\epsilon', r',\Omega')}\left[f(\epsilon_{\ell-u})g(r_{\ell-u})h(\Omega_{\ell-u}) \right].
\int_0^{\epsilon_0}\int_{\mathbb{R}^3}\int_{\mathbb{S}_2}
f(\epsilon)g(r)h(\Omega)\mathbf{1}_{(\epsilon\leq \epsilon')}\frac{\mathtt{v}(\epsilon', r',\Omega'; \epsilon, r, \Omega)}{\varsigma(\epsilon, r, \Omega)}\, \dd \epsilon\,\dd r\,\dd \Omega,
\end{align*}
where  we have used  that the solution to \eqref{VSDE2} has the strong Markov property.

This tells us that the resolvent of $(\epsilon_\ell, r_\ell, \Omega_\ell)$, $\ell\geq0$, with killing on exiting $D$, has a joint density with respect to ${\rm Leb}_\mathcal{E}\times{\rm Leb}_{D}\times \Sigma$. We also note for later that, again  by the principle of path counting, there  also exists a distributional density $\mathtt{v}^D(z_0; \varepsilon, r, \upsilon)$ of the process $(R,{\mathcal S})$ killed when $R$ first exits $D$.
 
\smallskip

\noindent {\it Step 5.} Suppose now we want to consider the law of $Z$ up to killing the process with an instantaneous rate $q(Z_\ell)\geq0$. That is, we are interested in the probabilities 
$(\mathbf{P}^{D, q}_{(\epsilon, r,\Omega)},  \epsilon\in\mathcal{E}, r\in D, \Omega\in\mathbb{S}_2)$, where
\begin{equation}
\mathbf{E}^{D,q}_{(\epsilon_0, r_0,\Omega_0)}\left[f(\epsilon_\ell)g(r_\ell)h(\Omega_\ell) \right]=
\mathbf{E}^D_{(\epsilon_0, r_0,\Omega_0)}\left[{\rm e}^{-\int_0^\ell q(Z_l)\dd l}f(\epsilon_\ell)g(r_\ell)h(\Omega_\ell) \right],
\label{of the form}
\end{equation}
for bounded and measurable $f, g, h$. We note that  with $z_0 = (\epsilon_0, r_0, \Omega_0)$, we have
\begin{align}
&\mathbf{E}^{D,q}_{z_0}\left[\int_0^{\Lambda}f(\epsilon_\ell)g(r_\ell)h(\Omega_\ell)\dd \ell\right]\notag\\
&=\mathbf{E}_{z_0}\left[\int_0^{\epsilon_0}{\rm e}^{-\int_0^{E^{-1}_\varepsilon}q(\epsilon_l, r_l, \Omega_l)\dd l}
\frac{f(\varepsilon) g(R_\varepsilon)h({\mathcal S}_\varepsilon)}{\varsigma(\varepsilon, R_\varepsilon,{\mathcal S}_\varepsilon)}\dd \varepsilon\right]\notag\\
&=\int_0^{\epsilon_0}f(\varepsilon)\mathbf{E}_{z_0}\left[{\rm e}^{-\int_0^{\varepsilon}
\frac{q(e, R_e, {\mathcal S}_e)}{\varsigma(e, R_e, {\mathcal S}_e)}\dd e}
\frac{ g(R_\varepsilon)h({\mathcal S}_\varepsilon)}{\varsigma(\varepsilon, R_\varepsilon,{\mathcal S}_\varepsilon)}\right]\dd \varepsilon.
\label{of the form2}
\end{align}
We can now appeal to  Theorem 1 of \cite{CU} (see also Section 2.2. of \cite{Lupu}) which allows us to write for all bounded measurable $F = F((R_\varepsilon, {\mathcal S}_\varepsilon), 0\leq \varepsilon\leq \epsilon_0)$, bounded measurable $G:\mathcal{C}\mapsto[0,\infty)$ and $z_0 = (\epsilon_0, r_0, \Omega_0)\in \mathcal{C}$
\begin{equation}
\mathbf{E}^D_{z_0}\left[F\cdot G(\varepsilon, R_\varepsilon, {\mathcal S}_\varepsilon)\right] = \int_{ D\times \mathbb{S}_2} \mathbf{E}^D_{\varepsilon, (r_0, \Omega_0)\to (r,\upsilon)}[F]\mathtt{v}^D(z_0;\varepsilon, r, \upsilon)G(\varepsilon, r, \upsilon)\dd r\,\dd \upsilon,
\label{density2} 
\end{equation}
where we understand 
 $\mathbf{P}^D_{\varepsilon, (r_0, \Omega_0)\to (r,\upsilon)}$ is the law of the $\mathbf{P}^D$-bridge of $(R_\varepsilon,{\mathcal S}_\varepsilon)$, $0\leq \varepsilon\leq \epsilon_0$, from $(r_0, \Omega_0)$ to $(r,\upsilon)\in D\times \mathbb{S}_2$ over a time-horizon $\varepsilon$. Noting that the expectation on the  right-hand side of  \eqref{of the form2}  can  be identified in the form \eqref{density2}, this tells us that the resolvent of the process $(\epsilon, r, \Omega)$ under $\mathbf{P}^{D,q}$ has a density, say $\rho^{D,q}(z_0,z)$, $z_0,z\in\mathcal{C}$, with respect to ${\rm Leb}_{\mathcal{E}}\times{\rm Leb}_{D}\times \Sigma$. 
\end{proof}

The proof of Theorem \ref{density} is now essentially a continuation from the previous proof. Accordingly, some of the notation is carried over. 

\begin{proof}[Proof of Theorem \ref{density}]
Without loss of generality, it suffices to consider SDEs of the form 

\begin{align}
Z_\ell&=
z_0
+
\left(
\begin{array}{l}
-\int_0^\ell\varsigma (Z_{{l}})\dd l - \int_0^\ell \int_{(0,1)}
\int_{\mathbb{S}_2} u\e_{{l}-} {N} (Z_{{l}-}; \dd {l}, \dd \up', \dd u)\\ \\
\int_0^\ell\Omega_{l}\dd l \\  \\
-\int_0^\ell m(Z_l)^2\Omega_l\dd l + \int_0^\ell  m(Z_{l})\Omega_l\wedge\dd B_{l} \\
\hspace{2cm}+\int_0^\ell\int_{(0,1)}\int_{\mathbb{S}_2} (\up' -\up_{{l}-}){N} (Z_{{l}-}; \dd {l}, \dd \up', \dd u)
\end{array}\right),
\label{VSDEc}
\end{align}
where for each $x\in \mathcal{C}, \ell\geq0, \up'
\in \mathbb{S}_2$, ${N} (x; \dd \ell, \dd \up', \dd u)$  is an optional random measure  with previsible intensity given by $\sigma(x)\pi (x;  \up', u)\dd \up'\, \dd u\, \dd \ell $,  with $\sigma\geq0$ uniformly bounded and $\pi(x;\cdot)$ a probability density function on $\mathbb{S}_2\times(0,1]$. 
We consider the SDE \eqref{VSDEc} both killed when $(r_\ell, \ell\geq0)$ first exits the domain $D$ as well as killed at instantaneous  rate $q(x) = \sigma(x)\pi(x; \mathbb{S}_2, \{1\})$. This time we use the notation $\mathbb{P}^{D,q}: = (\mathbb{P}^{D,q}_x, x\in\mathcal{C})$ to denote its associated probabilities.

Under the conditions of Theorem 
\ref{density} (in particular because of the boundedness of the rate functions away from 0 and $\infty$), it is a straightforward exercise using the unique strong solution to \eqref{VSDE2}  and the associated strong Markov property to uniquely construct a  strong solution to \eqref{VSDEc} via a standard piecewise path construction (see e.g. the approach in \cite{refracted}), which is also a strong Markov process.
Indeed, recall that $(Z, 
\mathbf{P}^{D,q})$ is a solution to \eqref{VSDE2} with killing on exiting $D$ and at rate $q$. We can construct $Z$ under $\mathbb{P}^{D,q}_x$, $x\in \mathcal{C}$ as follows
\[
Z_\ell = \left\{
\begin{array}{ll}
Z^{(1)}_\ell(x) & \text{ if } 0 = L^{(0)}\leq \ell <L^{(1)}\text{ and }\ell< \Lambda^{(1)},\\
Z^{(1)}_{L^{(1)}-}+ \Delta^{(1)} & \text{ if } \ell = L^{(1)}\text{ and }L^{(1)}< \Lambda^{(1)},\\
 Z^{(2)}_{\ell-L^{(1)}}(Z_{L^{(1)}}) & \text{ if } L^{(1)}<\ell<L^{(2)}\text{ and } \ell-L^{(1)}< \Lambda^{(2)},\\
 Z^{(2)}_{(L^{(2)}-L^{(1)})-}(Z_{L^{(1)}}) + \Delta^{(2)}&\text{ if } \ell= L^{(2)}\text{ and }L^{(2)}-L^{(1)}< \Lambda^{(2)},\\
 \vdots & \vdots\\
 Z^{(n)}_{\ell-L^{(n-1)}}(Z_{L^{(n-1)}}) & \text{ if } L^{(n-1)}<\ell<L^{(n)}\text{ and } \ell-L^{(n-1)}<\Lambda^{(n)},\\
 Z^{(n)}_{(L^{(n)}-L^{(n-1)})-}(Z_{L^{(n-1)}})+ \Delta^{(n)} &\text{ if } \ell= L^{(n)}\text{ and } L^{(n)}-L^{(n-1)}<\Lambda^{(n)},\\
 \vdots & \vdots
\end{array}
\right.
\]
where, for $n\geq1$, we can sequentially define   $Z^{(n)}(x)= (\epsilon^{(n)}(x), r^{(n)}(x), \Omega^{(n)}(x))$ as the strong solution of \eqref{VSDE2}, albeit driven by $B^{(n)}_\ell = B_{\ell + L^{(n-1)}} - B_{L^{(n-1)}}$, with initial state $Z^{(n)}_0(x)=x$. Moreover,  given this value, $L^{(n)}-L^{(n-1)}$ is the time at which $Z^{(n)}(x)$ experiences an event at instantaneous rate $\sigma(x)$,   the vector
\[
\Delta^{(n)} = \left(
\begin{array}{c}-\epsilon^{(n)}_{(L^{(n)}-L^{(n-1)})-}(x) u
\\
 0\\
  \Omega'-\Omega^{(n)}_{(L^{(n)}-L^{(n-1)})-}(x)
\end{array}
\right)
\]
 is distributed according to $\pi(x; \dd \Omega', \dd u)$ and $\Lambda^{(n)} = \inf\{\ell>0 : r^{(n)}_\ell(x)\not\in D\text{ or } \epsilon^{(n)}(x) =0\}$.

To prove that $\mathbb{P}^{D,q}$ has a resolvent density with respect to ${\rm Leb}_\mathcal{E}\times{\rm Leb}_D\times{\Sigma}$, let $(L_n, n\geq0) $ be the jump times (with respect to the parameter $\ell$) of $(Z, \mathbb{P}^{D,q})$ with $L_0 = 0$. 
Suppose that $F:\mathcal{C}\mapsto[0,\infty)$ is bounded and measurable and write for $z_0\in\mathcal{C}$,
\[
\mathtt{R}^{D,q}[F](z_0) = \mathbb{E}^{D,q}_{z_0}\left[\int_0^\Lambda F(\epsilon_{\ell}, r_{\ell}, \Omega_{\ell})\dd \ell\right].
\]
Write $\rho^{D,q+\sigma}(z_0, z)$, $z_0,z\in\mathcal{C}$ for the resolvent density associated with the law  $\mathbf{P}^{D,q+\sigma}_x$ that was defined via the SDE \eqref{VSDE2} killed both on exiting $D$ and with  rate $q + \sigma$ (previously we considered killing at rate $q$ however, the principle of the existence of a resolvent density is still available to us with this slight adjustment). 
Thanks to the strong Markov property, we have,
\begin{align}
\mathtt{R}^{D,q}[F](z_0) 
&= \mathbb{E}^{D,q}_{z_0}\left[\sum_{n = 0}^\infty\mathbf{1}_{(L_n<\Lambda)}
\int_{\mathcal{C}}
F(z)\rho^{D, q + \sigma}(Z_{L_n}; z)\dd z \right]\notag\\
 &=\int_{\mathcal{C}}
F(z) \mathbb{E}^{D,q}_{z_0}\left[\sum_{n = 0}^\infty\mathbf{1}_{(L_n<\Lambda)}
\rho^{D, q + \sigma}(Z_{L_n}; z)
 \right]\dd z,
 \label{inner}
\end{align} 
where the density $\rho^{D, q + \sigma}$ was defined towards the end of the proof of Theorem \ref{density}.
This concludes the current proof.  
\end{proof}

\section*{Acknowledgements} EH, AEK, and SO acknowledge support from the EPSRC grant {\it Mathematical Theory of Radiation Transport: Nuclear Technology Frontiers (MaThRad)}, EP/W026899/2. Moreover, we would like to thank colleagues from the UCLH proton beam facility (in particular Colin Baker) as well as members of the EPSRC MaThRad programme grant team for wider discussion around this text, in particular Maria Laura P\'erez Lara.

%\bibliographystyle{RS} %%%% .BST file
%\bibliography{references}{} %%%%% .Bib file

\bibliography{references}{}
\bibliographystyle{abbrv}

\end{document}